\newtheorem{remark}{Remark}
\newtheorem{example}{Example}
\newtheorem{proposition}{Proposition}
\newtheorem{corollary}{Corollary}
\DeclareMathOperator*{\esssup}{ess\,sup}
\renewcommand{\u}{u}                
\newcommand{\U}{\mathcal{U}}        
\newcommand{\parameterspace}[0]{\R^{\dimParameter}} 
\newcommand{\pdesolutionoperator}[0]{\mathcal{F}}
\newcommand{\Ltwotime}{L^2([0, \finalt])}
\newcommand{\dps}{\displaystyle}
\newcommand{\R}{\mathbb{R}}        
\newcommand{\x}{\bm{x}}            
\newcommand{\finalt}{T}            
\newcommand{\m}{\bm{m}}            
\newcommand{\mz}{\bm{m}_0}         
\newcommand{\mpr}{\bm{m}_\mathrm{pr}}         
\newcommand{\prior}{\mu_{\mathrm{pr}}}
\newcommand{\ObsPath}{\bm{p}}                   
\newcommand{\DAEPath}{\bm{p}}                   
\newcommand{\PathSpace}{\mathcal{P}}            
\renewcommand{\d}{d}               
\newcommand{\dobs}{\d_{\textnormal{obs}}}       
\newcommand{\mini}{\mathop{\mbox{minimize}}}
\newcommand{\MeasKern}{\Phi}       
\newcommand{\Domain}{\Omega}       
\newcommand{\dimDomain}{n_{\Domain}}
\newcommand{\CovPost}{\Sigma_{\rm{post}}(\ObsPath)}     
\newcommand{\CovPostDiscrete}{\Sigma_{\rm{post}}^h(\ObsPath^h)}     
\newcommand{\CovPr}{\Sigma_\mathrm{pr}}         
\newcommand{\CovNoise}{\Sigma_\mathrm{noise}}   
\newcommand{\iCovPr}{\Sigma_\mathrm{pr}^{-1}}           
\newcommand{\OpCovNoise}{\mathcal{C}_{\rm{noise}}}    
\newcommand{\Fisher}{\mathbf{M}_{\rm{Fisher}}(\ObsPath)}
\newcommand{\FisherDiscrete}{\mathbf{M}^h_{\rm{Fisher}}(\ObsPath^h)}
\newcommand{\ObjOED}{\Psi}          
\newcommand{\ObjAOED}{\Psi_{\rm{A}}}
\newcommand{\ObjDOED}{\Psi_{\rm{D}}}
\DeclareMathOperator{\tr}{tr}       
\newcommand{\postMeasure}{\mu_\mathrm{post}}    
\newcommand{\paratoobs}{\mathcal{G}}    
\newcommand{\paratoobsDiscrete}{\mathbf{G}({\ObsPath^h})}    
\newcommand{\paratoobsDiscreteh}{\mathcal{G}^h}
\newcommand{\admissible}{A_{\text{ad}}}
\newcommand{\admissibleDomain}{\Omega_{\text{ad}}}
\newcommand{\regularization}{{R}}
\newcommand{\dimTime}{n_{\rm{t}}}
\newcommand{\dimParameter}{M}
\newcommand{\tk}[1][k]{t_{#1}}
\newcommand{\dt}{\Delta t}
\newcommand{\mbasis}[1][m]{\mathbf{e}_{#1}}
\newcommand{\febasis}[1][k]{\psi_{#1}}
\newcommand{\pdesolutionoperatorDiscrete}[0]{\mathcal{F}^h}
\newcommand{\UDiscrete}[0]{\mathcal{U}^h}
\newcommand{\uDiscrete}[0]{\u^h}
\newcommand{\pk}[1][k]{\mathbf{p}_{#1}}
\newcommand{\pki}[1][k,i]{\mathbf{p}_{#1}}
\newcommand{\Identity}[1][??]{\mathbf{I}_{#1}}
\newcommand{\likelihood}{\pi_{\rm{like}}}
\newcommand{\mpost}{\m_{\ObsPath}}
\newcommand{\velocity}{v}
\newcommand{\angular}{\omega}
\newcommand{\heading}{\theta}
\DeclareAcronym{dae}{
short = \myred{DAE},
long = \myred{differential algebraic equation},
tag = abbrev
}
\DeclareAcronym{ipopt}{
short = IPOPT,
long = Interior Point OPTimizer,
tag = abbrev
}
\DeclareAcronym{nlp}{
short = NLP,
long = nonlinear programming,
tag = abbrev
}
\DeclareAcronym{ode}{
short = ODE,
long = ordinary differential equation,
tag = abbrev
}
\DeclareAcronym{oed}{
short = OED,
long = optimal experimental design,
tag = abbrev
}
\DeclareAcronym{pde}{
short = PDE,
long = partial differential equation,
tag = abbrev,
long-plural = s,
short-plural = s
}
\DeclareAcronym{spd}{
short = s.p.d.,
long = symmetric positive definite,
tag = abbrev
}
\DeclareAcronym{svd}{
short = SVD,
long = singular value decomposition,
tag = abbrev
}
\DeclareAcronym{uq}{
short = UQ,
long = uncertainty quantification,
tag = abbrev
}
\DeclareAcronym{wlog}{
short = w.l.o.g.,
long = without loss of generality,
tag = abbrev
}
\begin{document}

\title{Optimal Experimental Design of a Moving Sensor for Linear Bayesian Inverse Problems}

\author[1]{Nicole Aretz}
\author[2]{Thomas Lynn}
\author[1]{Karen Willcox}
\author[3]{Sven Leyffer}

\affil[1]{Oden Institute for Computational Engineering and Sciences, \newline University of Texas at Austin}
\affil[2]{Johns Hopkins Applied Physics Laboratory}
\affil[3]{Argonne National Laboratory}

\date{}

\maketitle

\begin{abstract}
We optimize the path of a mobile sensor to minimize the posterior uncertainty of a Bayesian inverse problem.
Along its path, the sensor continuously takes measurements of the state, which is a physical quantity modeled as the solution of a partial differential equation (PDE) with uncertain parameters.
Considering linear PDEs specifically, we derive the closed-form expression of the posterior covariance matrix of the model parameters as a function of the path, and formulate the optimal experimental design problem for minimizing the posterior's uncertainty.
We discretize the problem such that the cost function remains consistent under temporal refinement.
Additional constraints ensure that the path avoids obstacles and remains physically interpretable through a control parameterization.
The constrained optimization problem is solved using an interior-point method.
We present computational results for a convection-diffusion equation with unknown initial condition.
\end{abstract}

\section{Introduction}\label{sec:intro}

We optimize the path of a mobile sensor that takes measurements of a physical state for the purpose of inferring parameters in the state's governing dynamics.
The parameters are only indirectly observable through their influence on the state, described by a linear \ac{pde}.
Given measurements, Bayes theorem can be used to update prior information on the parameters; however, the uncertainty in the posterior distribution depends on the path along which the measurements are taken.
When resources are constrained (e.g., because measurements by a mobile sensor may be expensive, labor-intensive, or need to happen in time-critical scenarios) preparatory \ac{oed} is essential to reduce the posterior uncertainty as much as possible.

The optimal steering of moving devices is well-explored by the optimal control community for various purposes, including the steering of mobile sensors for parameter and system identification.
In waypoint-based approaches, the sensor moves between waypoints, with the exact path determined, for example, a priori through a network search (\cite{hitz2014fully, binney2013optimizing, binney2010informative}) or dynamically in real-time (\cite{denniston2019comparison, euler2014centralized, peng2015dddas}).
When parameters are only indirectly observable (e.g., when identifying a plume's original origin rather than monitoring its spread), an inverse problem needs to be solved to update uncertainties from measurement data.
Inverse problems are computationally expensive to solve and thus require reduced-order modeling or other emulators to enable real-time steering (\cite{park2010learning, wogrin2023data}) or more general sequential design policies (\cite{huan2016sequential, shen2025variational, shen2023bayesian}).
In sensor selection for \ac{pde}-constrained inverse problems, stationary sensor positions are selected to reduce the uncertainty of the inferred parameter.
The combinatorial sensor selection problem is solved, for example, through convex relaxation (\cite{alexanderian2014optimal,attia2018goal, attia2022optimal}), greedy algorithms (\cite{binev2018greedy, wu2023fast, aretz2024greedy, maday2015parameterized}), or combinatorial integer approximation approaches (\cite{sager2011combinatorial,sager2012integer, kirches2021compactness}).
The global optimization of continuous sensor paths for mobile sensors has, for example, been explored in \cite{song2007optimal, ucinski2005time, ucinski1999path, Tricaud2008} for deterministic inverse problems, in \cite{ha2019periodic, demetriou2011state, demetriou2009estimation} for state estimation via filtering, and in \cite{choi2010continuous, choi2009adaptive} for informative forecasting.
We refer to \cite{huan2024optimal, ucinski2004optimal} for introductions to \ac{oed}.

In this work, we formulate an \ac{oed} problem over the continuous path of a mobile sensor to reduce the uncertainty in the solution of a \ac{pde}-constrained Bayesian inverse problem that is informed by continuous or frequent measurements along the path.
An overview of this nested structure is provided in Figure \ref{fig:overview}.
Considering linear \acp{pde} specifically, we analyze the connection between the uncertainty in the posterior distribution and the sensor path, providing in particular gradients for use in the optimization.
Nonlinear \acp{pde} may be considered after a Laplace approximation.
We adopt a control-based characterization of the path as the solution of an \ac{ode} which guarantees that the obtained path is realistic.
This distinction between the path and its control, moreover, allows for a straightforward integration of both path and control constraints, e.g., obstacles or minimum turn radius, such that the \ac{oed} problem can be solved efficiently through an interior point algorithm even for non-convex domains.
We show how to discretize the \ac{oed} problem consistently such that it converges under arbitrarily fine time discretization. 
As a consequence, the optimization algorithm may be warm-started with optimal paths obtained for coarser resolutions to save computational resources.
We demonstrate our \ac{oed} formulation and algorithm on a convection-diffusion model for the spread of a pollutant in an urban environment, first for inferring a two-dimensional parameter vector and second for a parameter field.

\tikzset{
  bigbox/.style={rectangle, draw, fill=gray!30, thick, inner sep=5pt, rounded corners},
  titlebox/.style={anchor=west}, 
  innerbox/.style={rectangle, draw, fill=white, thick, minimum width=1.5cm, minimum height=1.0cm,
                   rounded corners, align=left},
  widebox/.style={rectangle, draw, fill=white, thick, minimum width=8cm, minimum height=1.6cm,
                   rounded corners, align=left}
}

\begin{figure}
\footnotesize
\begin{tikzpicture}

\node[titlebox, text width = 16cm] (title) at (0,0) {\textbf{Optimal experimental design:}
Optimize controls to minimize posterior uncertainty such that path constraints are satisfied.};

\node[innerbox, below=0.25cm of title.south west, anchor=north west, align=left, text width = 4cm, draw=red] (in11) {
\textbf{Controls} $\alpha$ describe path $\ObsPath$ through an ODE
};

\node[innerbox, below=0.7cm of in11, text width = 4cm, draw=red] (in21) {
\textbf{Path constraints} have sensor avoid obstacles
};

\node[innerbox, below=0.7cm of in21, text width = 4cm] (in31) {
After optimization, \newline sensor takes measurements $\dobs$ along $\ObsPath$
};

\node[innerbox, right=0.5cm of in11.north east, anchor=north west, text width = 8cm, align = left, fill=gray!15] (bayes) {
\begin{minipage}{0.97\textwidth}
    \begin{tikzpicture}
    \node[titlebox, text width = \textwidth] (bayestitle) at (0,0) {\textbf{Bayesian inverse problem:} 
Given measurements $d_{\rm{obs}}$ along path $\mathbf{p}$, update the prior probability distribution of parameters $\mathbf{m}$.};
        \node[innerbox, text width = \textwidth, below = 0.1cm of bayestitle] (para2obs){
        \textbf{Parameter-to-observable map} $\mathcal{G}(\mathbf{m}; \mathbf{p})$ \\
        1. \textbf{Forward model} $\mathcal{F}$ solves a PDE \\
        2. \textbf{Simulate measurements} $d(t; \mathcal{F}(m), \mathbf{p})$ of the state $\mathcal{F}(\mathbf{m})$ along path $\mathbf{p}$
        };
        \node[innerbox, below=0.1cm of para2obs, text width = \textwidth] (noise){
        \textbf{Noise model} $d_{\rm{obs}} = d + \eta$, $\eta \sim \mathcal{N}(0, \mathcal{C}_{\rm{noise}})$ yields likelihood of simulated measurements $d$
        };
    \end{tikzpicture}
\end{minipage}
};

\node[innerbox, right=0.5cm of bayes.north east, anchor=north west, text width = 2.5cm, align = left] (posterior) {
Posterior\\ distribution
};

\node[innerbox, right=0.5cm of bayes.south east, anchor=south west, align=left, text width = 2.5cm, draw=red] (uq) {
\textbf{Posterior\\ uncertainty} is measured by \textbf{utility cost function} $\ObjOED$
};

\newdimen\xcoord
\newdimen\ycoord
\pgfextracty{\ycoord}{\pgfpointanchor{in11}{east}};
\pgfextractx{\xcoord}{\pgfpointanchor{bayes}{west}};
\coordinate (helper1) at ($(\xcoord, \ycoord)$);

\newdimen\xcoord
\newdimen\ycoord
\pgfextracty{\ycoord}{\pgfpointanchor{in31}{east}};
\pgfextractx{\xcoord}{\pgfpointanchor{bayes}{west}};
\coordinate (helper2) at ($(\xcoord, \ycoord)$);

\newdimen\xcoord
\newdimen\ycoord
\pgfextracty{\ycoord}{\pgfpointanchor{posterior}{west}};
\pgfextractx{\xcoord}{\pgfpointanchor{bayes}{east}};
\coordinate (helper3) at ($(\xcoord, \ycoord)$);

\draw[->, thick] (in11.south) -- (in21.north);
\draw[->, thick] (in21.south) -- (in31.north);
\draw[->, thick] (in11.east) -- (helper1);
\draw[->, thick] (in31.east) -- (helper2);
\draw[->, thick] (helper3) -- (posterior.west);
\draw[->, thick] (posterior.south) -- (uq.north);

\begin{pgfonlayer}{background}
  \node[bigbox, fit=(title) (in11) (in31) (bayes) (posterior) (uq), align = left] {};
\end{pgfonlayer}

\end{tikzpicture}

\caption{Overview of the components of the \ac{oed} problem and their connections.
The optimization problem is solved using an interior point method.
}
\label{fig:overview}
\end{figure}
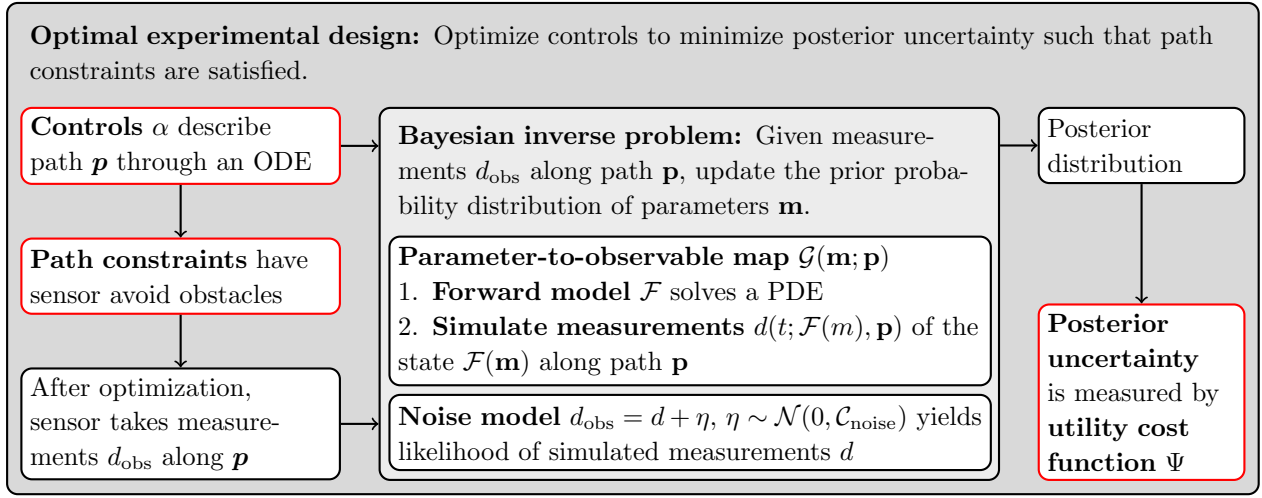

The remainder of this paper is organized as follows.
In Section \ref{sec:Fwd2OED}, we derive the Bayesian inverse problem for inferring parameters of interest from time-dependent measurement data measured by a sensor moving along an arbitrary path.
Next, in Section \ref{sec:oed}, we formulate our \ac{oed} objective function and associated minimization problem.
In Section \ref{sec:implementation}, we discretize the \ac{oed} problem, provide gradients, and explain our proposed optimization algorithm.
Finally, we demonstrate our proposed methodology in Section \ref{sec:numerics} on a convection-diffusion problem before concluding in Section \ref{sec:conclusion}.

\section{Bayesian inverse problem}\label{sec:Fwd2OED} 

In this section, we establish the mathematical setting and notation surrounding our \ac{pde}-constrained Bayesian inverse problem with measurements from a mobile sensor.
Because we consider linear \acp{pde} and we assume an additive Gaussian noise model below, the \ac{oed} formulations for the Bayesian and the deterministic inverse problem settings are structural analogues, and our \ac{oed} methodology can be applied to both with only slight differences in the interpretation of the involved matrices.

Let $\Domain \subset \R^{n_{\Domain}}$ with $n_{\Domain} \in \{2, 3\}$ be a bounded domain representing a physical area.
We are interested in \textit{states} $\u : [0, \finalt] \times \Domain \rightarrow \R$ that arise as solutions of a linear \ac{pde} for different parameters $\m \in \parameterspace$, with $\dimParameter \in \mathbb{N}$ and $\finalt > 0$ fixed.
To formalize this notion, let $\mathcal{V} \subset L^2(\Omega)$ be a Banach space on $\Domain$, and $L^2([0, \finalt], \mathcal{V})$ be the Bochner space of all functions $\u : [0, \finalt] \rightarrow \mathcal{V}$ such that $\|u\|_{L^2([0, \finalt], \mathcal{V})}^2 := \int_0^{\finalt} \|u(t)\|_{\mathcal{V}}^2 dt < \infty$ is finite.
A common choice would be, for instance, $\mathcal{V} = H^1(\Omega)$ the Sobolev space of one-time weakly differentiable functions.
We denote by $\pdesolutionoperator : \parameterspace \rightarrow \U \subset L^2([0, \finalt], \mathcal{V})$ the linear \ac{pde} solution operator that maps any parameter $\m \in \parameterspace$ onto the solution $u=\pdesolutionoperator(\m)$ of the \ac{pde}'s residual equation in $\mathcal{V}'$.
We assume that the domain $\Domain$ and the space $\U$ are chosen appropriately such that $\pdesolutionoperator$ is well-defined and continuous.

\begin{remark}
    We note that this setting permits stationary \acp{pde} on $\mathcal{V}$ by setting $\mathcal{U} := \{\u : [0, T] \rightarrow \mathcal{V} \text{ stationary }\}$.
\end{remark}

In this work, we consider the \ac{oed} of a mobile sensor whose path $\ObsPath : [0, \finalt] \rightarrow \R^{n_{\Domain}}$ is to be optimized.
To ensure that the sensor's path is physically feasible and remains within $\Domain$, we specifically consider paths from the set
\begin{equation}\label{eq:PathSpace}
    \begin{aligned}
    \PathSpace  := \{ &\ObsPath \in \mathcal{C}^1([0, \finalt]; \Omega) : ~ \exists \alpha \in \admissible \text{ s.t. } \ObsPath(0)=\ObsPath_0(\alpha), ~ \dot\ObsPath (t) = f(\alpha,\ObsPath(t),t) ~\forall t \in [0, T]  \}.
\end{aligned}
\end{equation}
Here, $\dot\ObsPath = f(\alpha,\ObsPath,t)$ with initial condition $\ObsPath(0)=\ObsPath_0(\alpha)$ is an \ac{ode} describing the dynamics of the physical sensor under varying control inputs $\alpha$.
The admissible set $\admissible$ for the controls $\alpha$ is chosen such that any $\ObsPath \in \PathSpace$ is considered realistic.
While the characterization of $\PathSpace$ can, in general, take other forms, expressing it through \acp{ode} is beneficial for optimizing over $\ObsPath \in \PathSpace$ numerically.
We discuss this claim in Section \ref{sec:optimization} below.

\begin{example}\label{ex:ODE}
    A common characterization of flight paths in the optimal control literature is through the velocity $v(t)$, angular velocity $\omega(t)$, and heading $\theta(t)$ of the mobile sensor, known as a \emph{unicycle kinematic model} (see \cite{unicyclerobot}, Section 13.3.1.1).
    For $\Omega \subset \R^2$, the relationship is expressed in the \ac{ode} system
    \begin{align}\label{ex:ode}
        \left(
        \begin{array}{c}
            \dot{\ObsPath}_1(t) \\
            \dot{\ObsPath}_2(t) \\
            \dot{\theta}(t)
        \end{array}
        \right) = 
        \left(\begin{array}{c}
            v(t) \cos(\theta(t)) \\
            v(t) \sin(\theta(t)) \\
            \omega(t)
        \end{array} \right)
    \end{align}
    with initial conditions $\ObsPath(0) = \mathbf{x}_0 \in \Omega$ and $\theta(0) = \theta_0 \in [0,2\pi)$.
    In our notation above, the control parameter becomes $\alpha = (\mathbf{x}_0, \theta_0, v, \omega)$ with $\ObsPath_0(\alpha):=\mathbf{\x}_0$.
    The admissible set $\admissible \subset \Omega \times \R \times L^2([0, \finalt]) \times L^2([0, \finalt])$ might then restrict the initial position and the minimum and maximum velocity and angular velocities according to the physical specifications of the sensing platform.
\end{example}

For an admissible path $\ObsPath \in \PathSpace$, we consider measurements of a state $\u \in \mathcal{U}$ at time $t \in [0, \finalt]$ to be of the (general) form
\begin{equation}\label{eq:data:general}
    \d(t; \u, \ObsPath) = \int_\Domain \u(t, \mathbf{x}) \MeasKern(\mathbf{x}, \ObsPath(t)) d\mathbf{x} 
\end{equation}
with a kernel function $\MeasKern : \Domain \times \Domain \rightarrow \R$.
The kernel function $\MeasKern$ defines a local weight of the values of $u(t, \,\cdot\,)$ around the position $\ObsPath(t)$ of the sensor at any time $t$, reflecting application-specific properties of the sensor such as detection radius or sensitivity.

\begin{example}\label{ex:gaussian} In general, it is unlikely that the sensor can follow the planned flight path exactly, e.g., due to navigation difficulties or imprecisions. We can include the uncertainty in its position in the measurement, for example, as
\begin{align}\label{eq:data:gaussian}
    \d(t; \u, \ObsPath) = (2\pi\sigma^2)^{-
    \dimDomain/2} \int_{\Domain} \exp\left( -\frac{1}{2\sigma^2} \|\mathbf{x}-\ObsPath(t)\|_2^2 \right) \u(t, \mathbf{x}) d\mathbf{x}
\end{align}
for a scaling variable $\sigma > 0$.
At any time $t$, the measurement data $\d(t; \u, \ObsPath)$ is the expectation $\mathbb{E}_{\mu}[\bar{\u}(t, \cdot)]$ of the extension $\bar{\u} : [0, \finalt]\times\mathbb{R}^{\dimDomain} \rightarrow \mathbb{R}$, $\left. \bar{u} \right|_{\Omega} = \u(t, \cdot)$, $\left. \bar{u} \right|_{\mathbb{R}^{\dimDomain} \setminus \Omega} = 0$ under the multivariate Gaussian $\mu = \mu(t, \ObsPath) = \mathcal{N}(\ObsPath(t), \sigma^2 \mathbf{I}_{\dimDomain})$ centered at $\ObsPath(t)$.
\end{example}

\begin{example}\label{ex:uniform}
To model the case where the position of the sensor is exact but its measurement is a spatial average over a surrounding ball $B_r(\ObsPath(t))$ with radius $r$,
we choose the measurement kernel $\MeasKern(\mathbf{x}, \mathbf{y}) = |B_{r}(0)|^{-1}\mathbbm{1}_{B_{r}(\mathbf{y})}(\mathbf{x})$.
Assuming the sensor's path $\ObsPath(t) \in \Omega$ remains far enough from the domain boundary $\partial \Domain$ such that 
$\text{dist}(\ObsPath(t), \partial \Omega) > r$, the measurement data
\begin{align}\label{eq:data:uniform}
    \d(t; \u, \ObsPath) = \frac{1}{|B_r(\ObsPath(t))|}\int_{B_r(\ObsPath(t))} \u(t, \mathbf{x}) d\mathbf{x} 
    = \mathbb{E}_{\mu}[u(t, \cdot)]
\end{align}
is the expectation of $u(t, \cdot)$ under the continuous uniform distribution $\mu = \mu(t, \ObsPath) = \mathcal{U}(B_r(\ObsPath(t)))$ on the ball $B_r(\ObsPath(t))$ centered around the position, $\ObsPath(t)$ of the moving sensor at time $t$.
\end{example}

We make the mild assumption that
\begin{align}\label{KernelAssumptions}
    0 < \esssup_{\mathbf{y} \in \Omega} \|\MeasKern(\,\cdot\,, \mathbf{y})\|_{L^2(\Omega)} < \infty,
\end{align}
which, firstly, ensures that definition \eqref{eq:data:general} defines $d(\,\cdot\,; \u, \ObsPath)$ as an element of $L^2([0, \finalt])$ for all $\u \in \mathcal{U}$ (c.f., Corollary \ref{thm:dinL2} in \ref{sec:proofs}); and secondly avoids a potential pitfall that arises if the measurement was defined over a set of measure zero, which requires more regularity of the state $\u$ in order for $\d$ to be meaningful.
Concatenated with the \ac{pde} solution operator $\pdesolutionoperator$ mapping parameters $\m$ onto states $\u = \pdesolutionoperator(\m) \in \U$, we obtain the \textit{parameter-to-observable} map $\paratoobs(\,\cdot\,;\ObsPath) : \parameterspace \rightarrow L^2([0, \finalt])$ defined as
\begin{align}\label{eq:paratoobs}
    \paratoobs(\m; \ObsPath) := \d(\, \cdot \, ;\pdesolutionoperator(\m), \ObsPath) \in L^2([0, \finalt])
\end{align}
for all $\m \in \parameterspace$.
Note that for any fixed $\ObsPath \in \PathSpace$, the parameter-to-observable map $\paratoobs(\,\cdot\,; \ObsPath)$ is a linear and continuous operator from the parameter space $\parameterspace$ into $L^2([0, \finalt])$.
Letting $\{\mbasis[i]\}_{i=1}^{\dimParameter}$ be the unit basis of $\mathbb{R}^{\dimParameter}$, we can thus write
\begin{align}\label{eq:paratoobs:linear}
    \paratoobs(\m; \ObsPath) = \sum_{i=1}^{\dimParameter} \m_i \d(\, \cdot \,; \pdesolutionoperator(\mbasis[i]), \ObsPath) = \sum_{i=1}^{\dimParameter} \m_i g_i(\ObsPath) \in L^2([0, \finalt]).
\end{align}
The abbreviation $g_i(\ObsPath) \in L^2([0, \finalt])$, defined as
\begin{align}\label{eq:unitoverservations}
    g_i(\ObsPath)(t) := \d(t; \pdesolutionoperator(\mbasis[i]), \ObsPath) = \int_\Domain \pdesolutionoperator(\mbasis[i])(t, \mathbf{x}) \MeasKern(\mathbf{x}, \ObsPath(t)) d\mathbf{x},
\end{align}
is the indirect observation of the $i$-th unit basis vector $\mbasis[i]$ under the \ac{pde}.

Following the Bayesian approach to inverse problems, we model the parameter of interest $\m$ as a random variable.
Moreover, we impose a Gaussian prior distribution $\m \sim \mathcal{N}(\mpr, \CovPr)$ with given prior mean $\mpr \in \parameterspace$ and prior covariance matrix $\CovPr \in \mathbb{R}^{\dimParameter \times \dimParameter}$.
We assume without loss of generality\footnote{
The case of a rank-deficient covariance matrix $\CovPr$ is recovered through restriction onto the orthogonal complement of its nullspace.
} that $\mathcal{N}(\mpr, \CovPr)$ is non-degenerate, i.e., $\CovPr \in \mathbb{R}^{\dimParameter \times \dimParameter}$ is invertible.
The overall goal of the Bayesian inverse problem is to decrease the uncertainty in the prior by informing it with observational data $\dobs$.
We model the observations received from the moving sensor along path $\ObsPath \in \PathSpace$ through an additive Gaussian noise model of the form
\begin{align}\label{eq:additiveNoise}
    \dobs = \dobs(\ObsPath) = \paratoobs(\m; \ObsPath) + \eta
\end{align}
where $\eta \in L^2([0, \finalt])$, $\eta \sim \mathcal{N}(0, \OpCovNoise)$ with mean $0 \in \Ltwotime$ and a covariance operator $\OpCovNoise : L^2([0, \finalt]) \rightarrow L^2([0, \finalt])$, i.e., $\OpCovNoise$ is a self-adjoint, symmetric positive semi-definite trace-class operator (\cite{stuart2010inverse}, Theorem 6.6).
For simpler exposition below, we assume in the following without loss of generality\footnote{The arguments still apply to the general setting by restriction onto the Cameron-Martin space $\text{Im}(\OpCovNoise^{1/2})$, see \cite{stuart2010inverse}, p. 530.} that 
$\OpCovNoise$ is indeed symmetric positive-definite.

\begin{remark}
    Considering the observational data as a function in $L^2([0,T])$ is sensible when the frequency at which data is measured and saved is much finer than the resolution needed to resolve the path; see \cite{ha2019periodic} (Section 1) for a discussion.
    The ``discrete time'' setting of measurements at distinct points $0<\tau_1< \dots < \tau_{n_{\tau}} \le T$ for some $n_{\tau} \in \mathbb{N}$ can be treated by restricting \eqref{eq:data:general} to the time steps $t = \tau_1, \dots, \tau_{n_{\tau}}$, changing the space of measurement space from $L^2([0, \finalt])$ to $\R^{n_{\tau}}$, and changing the noise model to $\mathcal{N}(\mathbf{0}, \CovNoise)$ with symmetric positive definite noise covariance matrix $\CovNoise \in \R^{n_{\tau} \times n_{\tau}}$.
\end{remark}

With the additive Gaussian noise in \eqref{eq:additiveNoise},
the likelihood of a parameter $\m$ given noisy observational data $\dobs$ is proportional to
\begin{equation}\label{eq:likelihood}
    \likelihood(\m) \propto \exp\left( -\frac12 \| \OpCovNoise^{-1/2}(\paratoobs(\m; \ObsPath) - \dobs) \|_{\Ltwotime}^2 \right). 
\end{equation}
Consequently, following Bayes Theorem (e.g., \cite{stuart2010inverse}, Section 2.2), the posterior measure $\postMeasure$ is proportional to
\begin{align}\label{eq:postMeasure}
    \postMeasure \propto \exp \left( -\| \OpCovNoise^{-1/2}(\paratoobs(\m; \ObsPath) - \dobs) \|_{\Ltwotime}^2 - \|\m - \mz\|^2_{\CovPr^{-1}} \right)
\end{align}
More specifically, using that the parameter-to-observable map $\paratoobs(\ObsPath)$ is linear, $\postMeasure$ is a Gaussian $\mathcal{N}(\mpost, \CovPost)$ with
posterior mean and posterior covariance, 
\begin{align}
    \mpost &:= \CovPost (\paratoobs(\ObsPath)^{*} \OpCovNoise^{-1} \dobs + \iCovPr \mz) \in \parameterspace\\
    \CovPost &:= \left(\paratoobs(\ObsPath)^{*} \OpCovNoise^{-1} \paratoobs(\ObsPath) + \iCovPr\right)^{-1} \in \mathbb{R}^{\dimParameter \times \dimParameter}, \label{eq:CovPost}
\end{align}
where $\paratoobs(\ObsPath)^{*} : L^2([0, \finalt]) \rightarrow \parameterspace$ denotes the adjoint operator operator of $\paratoobs(\ObsPath)$.
The first term in \eqref{eq:CovPost},
\begin{align}\label{eq:Fisher}
    \Fisher := \paratoobs(\ObsPath)^{*} \OpCovNoise^{-1} \paratoobs(\ObsPath) \in \mathbb{R}^{\dimParameter \times \dimParameter},
\end{align}
is the Fisher information matrix, i.e., the second derivative of the negative log-likelihood function $-\log(\likelihood)$.
Inserting the representation \eqref{eq:paratoobs:linear} of $\paratoobs(\ObsPath)$, we get that
\begin{align}\label{eq:Fisher:entries}
    \left[\Fisher\right]_{i,j} 
    = \left<\OpCovNoise^{-1/2} g_i(\ObsPath), \OpCovNoise^{-1/2} g_j(\ObsPath)\right>_{L^2([0, \finalt])}
\end{align}
for $1 \le i, j \le \dimParameter$.
The fact that the forward model is linear implies that the posterior covariance matrix does not depend on observational data, or on the posterior mean. 
As such, we can solve the \ac{oed} problem efficiently, which would otherwise be a bilevel optimization problem.

\section{Optimal Experimental Design}
\label{sec:oed}

The uncertainty in the posterior distribution $\mathcal{N}(\mpost, \CovPost)$ is characterized by the posterior covariance matrix $\CovPost$, and, in particular, its eigenvalues $\lambda_1(\ObsPath) \ge \dots \ge \lambda_{\dimParameter}(\ObsPath) \ge 0$.
Eigenvectors associated to large eigenvalues point into the directions in which the posterior mean is particularly uncertain; these are also the directions along which changes to the parameter have little effect on the observational data (relative to their prior uncertainty).
In contrast, eigenvectors associated to small eigenvalues point into the directions of small uncertainty, either because their prior uncertainty was already small or because the parameter-to-observable map is sensitive into these directions, yielding relatively large changes in observational data already for little parametric changes.
To quantify the overall uncertainty in the posterior, 
the literature distinguishes various \textit{utility cost functions} $\ObjOED : \mathbb{R}^{\dimParameter \times \dimParameter} \rightarrow \mathbb{R}$ (see \cite{ucinski2004optimal} for an overview).
The \textit{optimal experimental design} is then identified as the minimizer of the chosen utility cost function.
Here, 
we  consider the following \ac{oed} problem to find the optimal path $\ObsPath \in \mathcal{C}([0, \finalt], \Domain)$ 
\begin{align}\label{eq:OED:minimization:infeasible}
    \min_{\ObsPath \in \PathSpace} \ObjOED(\CovPost).
\end{align}
We restrict our exposition to two common choices for $\ObjOED$: the A- and D-\ac{oed} utility functions
\begin{align}
    \ObjAOED(\CovPost) &:= \tr\left( \CovPost \right) = \sum_{i=1}^\dimParameter \lambda_i(\ObsPath), &
    \ObjDOED(\CovPost) &:= \det\left( \CovPost \right) = \prod_{i=1}^\dimParameter \lambda_i(\ObsPath).
\end{align}
Interpreted in terms of uncertainty, the A-\ac{oed} utility $\ObjAOED(\CovPost)$ penalizes the mean variance of posterior samples, while the D-\ac{oed} utility $\ObjDOED(\CovPost)$ penalizes the volume of the uncertainty ellipsoid described by $\CovPost$ (see \cite{ucinski2004optimal}, chapter 2.3).
In our notation hereafter, we use $\ObjAOED$ and $\ObjDOED$ only when specifically distinguishing between the \ac{oed} criteria, and otherwise use $\ObjOED$ to denote that a statement holds in generality.

In addition to ensuring that the path's dynamics remain physically consistent through $\ObsPath \in \PathSpace$, one needs to ensure that it also avoids obstacles. 
We assume for the sake of simplicity that any obstacle 
can be enclosed in a ball or cube of suitable size (or a combination thereof), so that we can write obstacle constraints as
\begin{align}
    \|\ObsPath(t) - \bm{c}\|_\mathbf{W}^2 & \ge 1, & \text{(ellipsoid)}\\
    \|\mathbf{T} \ObsPath(t) - \bm{c}\|_\infty & \ge 1, &\text{(rectangle)}
\end{align}
where $\bm{c} \in \R^{\dimDomain}$ is the constraint's center, $\mathbf{W} \in \mathbb{R}^{\dimDomain \times \dimDomain}$ is symmetric positive definite, and $\mathbf{T} \in \mathbb{R}^{\dimDomain \times \dimDomain}$ is a linear transformation matrix (scaling, rotation, reflection, shear, etc.) that maps the rectangle that we wish to exclude to the unit rectangle.
This representation lets us summarize all positional restrictions ---including $\ObsPath(t) \in \Domain$ for all $t \in [0, \finalt]$--- in the \textit{path constraints}
\begin{equation}\label{eq:pathCons}
    c(\ObsPath,t) \geq 0 \quad \forall t \in [0, \finalt],
\end{equation}
for some function $c : \mathcal{C}([0, \finalt], \R^{\dimDomain}) \to \R^{n_{\rm{constr}}}$ with $n_{\rm{constr}} \in \mathbb{N}$ being the number of constraints.
We note that the cube constraint is not differentiable at its center $\bm{c}$ due to the use of the $\ell_{\infty}$ norm. However, it is always continuously differentiable in a neighborhood of the feasible set, because the right-hand-side is strictly positive.

In addition, we add a regularization term, $\regularization(\alpha, \ObsPath)$, to the cost function that penalizes
positional (e.g., closeness to obstacles) and control-related (e.g., sudden acceleration) considerations separately. Thus,
we arrive at the constrained, regularized minimization problem
\begin{subequations}\label{eq:flightpathOED:abstract}
\begin{align}
    &\mini_{\alpha \in \admissible, \ObsPath \in \mathcal{C}^1([0, \finalt], \R^{\dimDomain})} 
        && \ObjOED(\CovPost) + \regularization(\alpha, \ObsPath) \\
    &\text{subject to} 
        &&\dot\ObsPath = f(\alpha,\ObsPath,t), \quad \text{a.e. in } [0, \finalt] && \text{(control parameterization, \eqref{eq:PathSpace})}\\
        & && \ObsPath(0)=\ObsPath_0(\alpha), &&\text{(initial position, \eqref{eq:PathSpace})}\\
     &   && 0 \leq c(\ObsPath,t), \quad \text{a.e. in } [0, \finalt] && \text{(path constraints, \eqref{eq:pathCons})} 
\end{align}    
\end{subequations}
where the posterior covariance matrix, $\CovPost$, is defined via the following set of equations almost everywhere (a.e.) $t \in [0, \finalt]$ and for all $1 \le i \le \dimParameter$:
\[
    \begin{array}{ll}
        \dps \CovPost = \big[\Fisher + \iCovPr\big]^{-1} & \text{(post. cov. matrix, \eqref{eq:CovPost} and \eqref{eq:Fisher})} \\
        \dps \left[\Fisher\right]_{i,j} = \int_{[0, \finalt]} (\OpCovNoise^{-1/2} g_i(\ObsPath)(s)) (\OpCovNoise^{-1/2}g_j(\ObsPath)(s)) ds & \text{(Fisher information matrix, \eqref{eq:Fisher:entries})} \\
        \dps g_i(\ObsPath)(t) = \int_{\Omega} \pdesolutionoperator(\mbasis[i]) \MeasKern(\mathbf{y}, \ObsPath(t)) d\mathbf{y} & \text{(unit observations under \ac{pde}, \eqref{eq:unitoverservations})} 
    \end{array}
\]
where $\{\mbasis[i]\}_{i=1}^{\dimParameter}$ is the unit basis of $\R^{\dimParameter}$.

We note that \eqref{eq:flightpathOED:abstract} differs from a classical optimization problem in the sense that the definition of the objective function requires the assembly of the covariance matrix, $\CovPost$, and the computation of the eigenvalues of its inverse, which cannot in general be expressed as explicit algebraic equations. We show below that we can nevertheless apply classical optimization techniques because we can compute gradients of $\CovPost$ with respect to the path $\ObsPath$ using the chain rule.

\section{Computational approach}\label{sec:implementation}

We next describe how to solve the \ac{oed} minimization problem \eqref{eq:flightpathOED:abstract} numerically.
We proceed in the following steps:
First, in Section \ref{sec:discretization}, we discretize the posterior covariance matrix $\CovPost$.
In Section \ref{sec:optimization}, we state the discrete optimality system and discuss our proposed algorithm for solving it.
Finally, in Section \ref{sec:gradients:discrete}, we determine the gradients of the A- and D-\ac{oed} utility values $\ObjAOED(\CovPost)$ and $\ObjDOED(\CovPost)$ with respect to the path $\ObsPath$ as required by the optimization algorithm.
Where possible, we denote discretized variables in boldface, with vectors in lower-case and matrices in upper-case; where this is not possible, we use the superscript ``\textit{h}'' to denote discrete.

\subsection{Discretization}\label{sec:discretization}

To approximate the posterior $\postMeasure = \mathcal{N}(\mpost, \CovPost)$ numerically, we first discretize the forward model to obtain a discrete solution operator $\pdesolutionoperatorDiscrete : \parameterspace \rightarrow \UDiscrete$ mapping any $\m \in \parameterspace$ onto an approximation $\pdesolutionoperatorDiscrete(\m) \approx \pdesolutionoperator(\m)$ within a finite-dimensional subspace $\UDiscrete \subset \U \subset L^2([0, \finalt], \mathcal{V})$.
We require that the discretized \ac{pde} solution $\pdesolutionoperatorDiscrete(\m)$ can be evaluated at any time $t\in[0, \finalt]$, which can be achieved, for instance, by interpolation between steps of a time stepping scheme, or through the use of a space-time finite element discretization.
Note that we do not prescribe a specific time discretization for the forward model, as we consider it an external input that the \ac{oed} problem builds upon.

Next, we introduce a discretization of the time domain $[0, \finalt]$ into $\dimTime \in \mathbb{N}$ time steps $0 = \tk[1] < \tk[2] < \dots < \tk[\dimTime] = \finalt$.
For ease of exposition, we let the time steps be equidistant, i.e., $\tk = (k-1) \dt$ with $\dt := T/(\dimTime-1)$.
We solve the \ac{ode} numerically through a time-stepping scheme, denoting the obtained time steps by $\pk^h \approx \pk := \ObsPath(\tk)$, $1\le k \le \dimTime$.
We define $\ObsPath^h := \{\pk^h\}_{k=1}^{\dimTime}$.

As the first step in the discrete approximation to the parameter-to-observable map $\paratoobs$, we define the matrix $\paratoobsDiscrete = [\mathbf{g}_1, \dots, \mathbf{g}_{\dimParameter}] \in \mathbb{R}^{\dimTime \times \dimParameter}$,
\begin{align}\label{eq:paratoobs:discrete}
    [\paratoobsDiscrete]_{k,m} := [\mathbf{g}_m]_k := \int_{\Domain} \pdesolutionoperatorDiscrete(\mbasis)(\tk, \mathbf{x}) \MeasKern(\mathbf{x}, \pk^h)d\mathbf{x}
\end{align}
where $\mbasis \in \mathbb{R}^{\dimParameter}$, $(\mbasis)_n = \delta_{n,m}$ are the unit basis vectors of $\parameterspace$.
The columns $\mathbf{g}_m \in \mathbb{R}^{\dimTime}$ of $\paratoobsDiscrete$ are ordered vectors containing the measurements of $\pdesolutionoperator(\mbasis)$ at the positions $\pk^h$, $1\le k \le \dimTime$.
To map the measurements at these discrete locations back into $L^2([0, \finalt])$, we interpolate linearly between them.
To this end, we let $\febasis : [0, \finalt] \rightarrow \mathbb{R}$ be the unique piecewise linear finite element basis functions with $\febasis(\tk[\ell]) = \delta_{k, \ell}$, and define
\begin{align}\label{eq:discrete:dh}
    g^h_m(t; \ObsPath^h) := \sum_{k=1}^{\dimTime} \febasis(t) [\mathbf{g}_m]_k \approx g_m(t; \ObsPath^h)
\end{align}
as an approximation to the indirect observations $g_m$ of the unit basis from \eqref{eq:unitoverservations}.
This gives us the discrete parameter-to-observable map $\paratoobsDiscreteh(\ObsPath^h) : \parameterspace \rightarrow L^2{[0,\finalt]}$ through
\begin{align}
    \paratoobsDiscreteh(\m; \ObsPath^h)(t) 
    := \sum_{m=1}^{\dimParameter} \m_m g^h_m(t; \ObsPath^h)
    = \sum_{m=1}^{\dimParameter} \m_m \sum_{k=1}^{\dimTime} \febasis(t) [\mathbf{g}_m]_k
    = \sum_{k=1}^{\dimTime} \febasis(t) [\paratoobsDiscrete \m]_{k},
\end{align}
in a decomposition analogous to \eqref{eq:paratoobs:linear}.

\begin{proposition}\label{thm:continuity}
    Suppose $\pk^h \in \Domain$ for all $k \in \{1, \dots, \dimTime\}$ and $\pdesolutionoperator : \parameterspace \rightarrow \mathcal{C}([0, \finalt], L^2(\Domain))$.
    Suppose further that $\MeasKern(\mathbf{x}, \, \cdot \,) \in L^2(\Domain)$ is Lipschitz-continuous with Lipschitz continuity constant $\gamma_{\MeasKern}(\mathbf{x}) \le \overline{\gamma}$ for almost all $x \in \Domain$ and an upper bound $0 < \overline{\gamma} < \infty$.
    Then the matrix $\paratoobsDiscrete$ is continuous in each $\pk$, $k\in \{1, \dots, \dimTime\}$.
\end{proposition}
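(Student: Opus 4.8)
The plan is to show that each entry $[\paratoobsDiscrete]_{k,m}$ depends continuously on $\pk^h$ by exhibiting it as a composition of continuous maps, with the Lipschitz continuity of $\MeasKern$ in its second argument doing the essential work. Fix $k$ and $m$, and observe that $[\paratoobsDiscrete]_{k,m}$ depends on the path only through the single point $\pk^h$; the time step $\tk$ and the PDE solution $\pdesolutionoperatorDiscrete(\mbasis)(\tk, \cdot) \in L^2(\Domain)$ are fixed (here we use the hypothesis $\pdesolutionoperator : \parameterspace \rightarrow \mathcal{C}([0,\finalt], L^2(\Domain))$ so that evaluation at $\tk$ yields a well-defined $L^2(\Domain)$ function, and $\pdesolutionoperatorDiscrete(\mbasis)(\tk,\cdot) \in \UDiscrete \subset L^2(\Domain)$). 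Write $u_m := \pdesolutionoperatorDiscrete(\mbasis)(\tk, \cdot) \in L^2(\Domain)$ for brevity. Then $[\paratoobsDiscrete]_{k,m} = \langle u_m, \MeasKern(\,\cdot\,, \pk^h)\rangle_{L^2(\Domain)}$, so by Cauchy--Schwarz, for any two admissible points $\mathbf{q}, \mathbf{q}' \in \Domain$,
\[
    \big| \langle u_m, \MeasKern(\,\cdot\,, \mathbf{q}) - \MeasKern(\,\cdot\,, \mathbf{q}') \rangle_{L^2(\Domain)} \big| \le \|u_m\|_{L^2(\Domain)} \, \|\MeasKern(\,\cdot\,, \mathbf{q}) - \MeasKern(\,\cdot\,, \mathbf{q}')\|_{L^2(\Domain)}.
\]

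The remaining step is to bound $\|\MeasKern(\,\cdot\,, \mathbf{q}) - \MeasKern(\,\cdot\,, \mathbf{q}')\|_{L^2(\Domain)}$ using the pointwise Lipschitz hypothesis. For almost every $\mathbf{x} \in \Domain$ we have $|\MeasKern(\mathbf{x}, \mathbf{q}) - \MeasKern(\mathbf{x}, \mathbf{q}')| \le \gamma_{\MeasKern}(\mathbf{x}) \|\mathbf{q} - \mathbf{q}'\|_2 \le \overline{\gamma}\,\|\mathbf{q} - \mathbf{q}'\|_2$, and squaring and integrating over $\Domain$ (which is bounded, hence of finite measure $|\Domain| < \infty$) gives
\[
    \|\MeasKern(\,\cdot\,, \mathbf{q}) - \MeasKern(\,\cdot\,, \mathbf{q}')\|_{L^2(\Domain)} \le \overline{\gamma}\, |\Domain|^{1/2}\, \|\mathbf{q} - \mathbf{q}'\|_2.
\]
Combining the two displays, $\big|[\paratoobsDiscrete]_{k,m}(\mathbf{q}) - [\paratoobsDiscrete]_{k,m}(\mathbf{q}')\big| \le \|u_m\|_{L^2(\Domain)}\, \overline{\gamma}\, |\Domain|^{1/2}\, \|\mathbf{q} - \mathbf{q}'\|_2$, which is in fact a Lipschitz estimate in $\pk^h$ — stronger than the claimed continuity. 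Since $m$ and $k$ were arbitrary, every entry of $\paratoobsDiscrete$, and hence the matrix itself, is (Lipschitz-)continuous in each $\pk$.

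I expect no serious obstacle here; the only points requiring a little care are (i) making sure the measurability of $\mathbf{x} \mapsto \gamma_{\MeasKern}(\mathbf{x})$ and of $\mathbf{x} \mapsto |\MeasKern(\mathbf{x},\mathbf{q}) - \MeasKern(\mathbf{x},\mathbf{q}')|$ is legitimate so that the integral bound is valid (this follows since $\MeasKern(\,\cdot\,,\mathbf{q}), \MeasKern(\,\cdot\,,\mathbf{q}') \in L^2(\Domain)$ by hypothesis), and (ii) noting that the pointwise Lipschitz bound need only hold for almost all $\mathbf{x}$, which is exactly the hypothesis given and suffices for the $L^2$ estimate. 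One could alternatively invoke the assumption \eqref{KernelAssumptions} together with dominated convergence for a "soft" continuity argument, but the Lipschitz route above is both elementary and yields the sharper quantitative statement, so that is the approach I would present.
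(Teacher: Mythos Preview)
Your proposal is correct and follows essentially the same approach as the paper: both use the pointwise Lipschitz bound on $\MeasKern$ in its second argument together with $|\Domain|<\infty$ and $\pdesolutionoperatorDiscrete(\mbasis)(\tk,\cdot)\in L^2(\Domain)$ to obtain a Lipschitz estimate on each matrix entry. The only cosmetic difference is that you apply Cauchy--Schwarz first and then bound $\|\MeasKern(\cdot,\mathbf{q})-\MeasKern(\cdot,\mathbf{q}')\|_{L^2(\Domain)}$, whereas the paper applies the Lipschitz bound inside the integral first and then bounds $\|\pdesolutionoperatorDiscrete(\mbasis)(\tk,\cdot)\|_{L^1(\Domain)}$ by $\sqrt{|\Domain|}\,\|\pdesolutionoperatorDiscrete(\mbasis)(\tk,\cdot)\|_{L^2(\Domain)}$; the resulting constant is identical.
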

\begin{proof}
    The proof is provided in \ref{sec:proofs}.
\end{proof}

To approximate the Fisher information matrix $\Fisher$, we insert the representation \eqref{eq:discrete:dh} of $g^h_m$ in terms of the temporal basis functions $\psi_k$ into \eqref{eq:Fisher:entries}.
We obtain
\begin{equation}\label{eq:Fisher:discrete}
  \begin{aligned}
    \left[\FisherDiscrete\right]_{i,j} &= \int_{[0, \finalt]} \left(\OpCovNoise^{-1/2} g_i^h(t; \ObsPath^h)\right) \left(\OpCovNoise^{-1/2}g_j^h(t; \ObsPath)\right) dt \\
    &= \sum_{k,\ell=1}^{\dimTime} g_i^h(\tk; \ObsPath^h) g_{j}^h(\tk[\ell]; \ObsPath^h) \int_{[0, \finalt]} \left(\OpCovNoise^{-1/2} \psi_k(t)\right) \left(\OpCovNoise^{-1/2}\psi_{\ell}(t)\right) dt \\
    &= \mathbf{g}_i^T \CovNoise^{-1} \mathbf{g}_j \\
    &= [\paratoobsDiscrete^{\top}\CovNoise^{-1} \paratoobsDiscrete]_{i,j}
\end{aligned}  
\end{equation}
with (inverse) noise covariance matrix $\CovNoise^{-1} \in \mathbb{R}^{\dimTime \times \dimTime}$ defined as
\begin{align}\label{eq:CovNoise:inverse}
    \left[\CovNoise^{-1}\right]_{i,j} := \int_{[0, \finalt]} \left(\OpCovNoise^{-1/2} \psi_k(t)\right) \left(\OpCovNoise^{-1/2}\psi_{\ell}(t)\right) dt.
\end{align}
Note that $\CovNoise^{-1}$ as defined here is indeed invertible because the covariance operator $\OpCovNoise$ is symmetric positive-definite.
Definition \eqref{eq:CovNoise:inverse} requires knowledge of the \textit{inverse} noise covariance \textit{operator} $\OpCovNoise^{-1}$; this can be circumvented by following a discretize-then-optimize approach where $\OpCovNoise$ is discretized first to a matrix $\CovNoise$.
However, if $\OpCovNoise$ is defined as the weak solution operator to a \ac{pde} then $\OpCovNoise$ can simplify such that $\CovNoise^{-1}$ has indeed a closed-form expression that can be computed without taking the matrix-inverse.
We illustrate this claim in the following example.

\begin{example}\label{ex:noise:2}
    In our numerical experiments, we choose $\OpCovNoise$ as the  operator mapping any $f \in \Ltwotime$ onto the weak solution $v \in H^1([0, \finalt])$ of the \ac{pde}
    \begin{align}\label{ex:noise:pde:strong}
        -\Delta v + 100 v = f, \quad v'(0)=v'(\finalt)=0
    \end{align}
    Following \cite{stuart2010inverse} (Example 6.17) and \cite{BuiThanh2013}, this defines a Gaussian in $\Ltwotime$. 
    Consequently, for $v, \psi \in H^1([0,\finalt])$ holds
    \begin{align*}
        \left(\OpCovNoise^{-1} v, \psi \right)_{L^2([0,\finalt])}
        &= \int_{[0,\finalt]} \nabla v \cdot \nabla \psi + 100 v \psi dt.
    \end{align*}
    By using that $\OpCovNoise^{-1/2}$ is self-adjoint and that $\psi_i \in H^1([0, \finalt])$ holds for the temporal basis functions in our discretization, we get a closed-form expression for $\CovNoise^{-1}$.
\end{example}

With the discrete approximation \eqref{eq:Fisher:discrete} to the Fisher information matrix, we finally arrive at the discrete approximation to the posterior covariance matrix:
\begin{align}\label{eq:CovPost:discrete}
    \CovPostDiscrete :=
    \left(\FisherDiscrete + \iCovPr\right)^{-1}
    =\left(\paratoobsDiscrete^{\top}\CovNoise^{-1} \paratoobsDiscrete + \iCovPr\right)^{-1}.
\end{align}
By construction, $\FisherDiscrete + \iCovPr \in \mathbb{R}^{\dimParameter \times \dimParameter}$ is symmetric positive-definite, and the use of the matrix inverse in the definition \eqref{eq:CovPost:discrete} is thus justified.

\begin{corollary}
    Under the assumptions of Proposition \ref{thm:continuity}, $\CovPostDiscrete$ is continuous in each $\pk$, $k \in \{1, \dots, \dimTime\}$.
\end{corollary}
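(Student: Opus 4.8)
The statement is a routine consequence of Proposition~\ref{thm:continuity} together with the continuity of matrix inversion, so the plan is simply to chain these facts. Concretely, I would argue as follows. By Proposition~\ref{thm:continuity}, the map $\pk \mapsto \paratoobsDiscrete$ is continuous for each fixed $k \in \{1, \dots, \dimTime\}$ (with the other $\mathbf{p}_\ell$, $\ell \neq k$, held fixed), where we regard $\paratoobsDiscrete$ as an element of $\mathbb{R}^{\dimTime \times \dimParameter}$ equipped with any norm. Since the entrywise bilinear map $\mathbf{G} \mapsto \mathbf{G}^{\top} \CovNoise^{-1} \mathbf{G}$ is a polynomial in the entries of $\mathbf{G}$, it is continuous, and adding the constant matrix $\iCovPr$ preserves continuity; hence $\pk \mapsto \FisherDiscrete + \iCovPr = \paratoobsDiscrete^{\top}\CovNoise^{-1}\paratoobsDiscrete + \iCovPr$ is continuous as a map into $\mathbb{R}^{\dimParameter \times \dimParameter}$.

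Next I would invoke the fact, already noted in the excerpt immediately after \eqref{eq:CovPost:discrete}, that $\FisherDiscrete + \iCovPr$ is symmetric positive-definite for every admissible configuration of the $\mathbf{p}_\ell$; in particular it lies in the open subset $GL(\dimParameter)$ of invertible matrices. On $GL(\dimParameter)$ the inversion map $A \mapsto A^{-1}$ is continuous (e.g.\ by Cramer's rule, each entry of $A^{-1}$ is a ratio of polynomials in the entries of $A$ with nonvanishing denominator $\det A$). Composing the continuous map $\pk \mapsto \FisherDiscrete + \iCovPr$ with the continuous inversion map yields that $\pk \mapsto \CovPostDiscrete$ is continuous, which is the claim.

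There is essentially no analytical obstacle here: the only point requiring care is ensuring that the argument of the matrix inverse never leaves the set of invertible matrices, which is exactly why the symmetric positive-definiteness of $\FisherDiscrete + \iCovPr$ is highlighted before the corollary. I would state this explicitly so that the continuity of inversion may be applied on $GL(\dimParameter)$ rather than only in a neighborhood of a point. Everything else is composition of continuous functions, so the proof is a few lines; in the write-up I would just cite Proposition~\ref{thm:continuity}, note the polynomial dependence through $\FisherDiscrete$, and conclude.
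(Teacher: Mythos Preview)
Your proposal is correct and follows essentially the same approach as the paper: invoke Proposition~\ref{thm:continuity} for continuity of $\paratoobsDiscrete$, then use that $\FisherDiscrete + \iCovPr$ remains symmetric positive-definite (the paper phrases this as its eigenvalues being bounded below by those of $\iCovPr$) so that continuity of matrix inversion applies. The paper's proof is a one-sentence version of exactly this argument.
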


\begin{proof}
    The proof follows immediately from the continuity of $\paratoobsDiscrete$ in $\pk$ and because the eigenvalues of $\paratoobsDiscrete^{\top}\CovNoise^{-1} \paratoobsDiscrete + \iCovPr$ are bounded from below by those of $\iCovPr$.
\end{proof}

In general, we do not invert the matrix on the right-hand side, but either work with a factorization or with the reciprocal of the eigenvalues of $\CovPostDiscrete^{-1}$:
If $\CovNoise^{-1}$ has a sparse closed-form expression (as in Example \ref{ex:noise:2}), then the matrix action of $\FisherDiscrete$ can be computed in $\mathcal{O}(\dimTime \dimParameter)$ from $\paratoobsDiscrete \in \mathbb{R}^{\dimTime \times \dimParameter}$.
The eigenvalues of $\CovPost$ can then be computed iteratively without explicitly forming $\Fisher$, $\iCovPr$, or $\CovPostDiscrete^{-1}$.
Since the A- and D-\ac{oed} utility cost functions require all $\dimParameter$ eigenvalues, the cost of evaluating $\ObjAOED(\CovNoise)$ and $\ObjDOED(\CovNoise)$ then scales as $\mathcal{O}(\dimTime \dimParameter^2)$.
However, this order is contingent on the availability of $\paratoobsDiscrete$, which naturally incurs additional compute time, specifically from two sources: first, the evaluation of the \ac{pde} for all parameters $\m \in \parameterspace$, and second, the computation of the measurement data $\mathbf{d}^h_m(\ObsPath^h)$ at the $\dimTime$ discrete positions $\pk^h$, $1\le k \le \dimTime$.
Here, we note first that if the parameter space dimension $\dimParameter$ is reasonably small, the states $\{\pdesolutionoperatorDiscrete(\mbasis)\}_{m=1}^{\dimParameter}$ can be precomputed and stored;
the matrix $\paratoobsDiscrete$ can then be assembled for any $\ObsPath^h$ without requiring any additional \ac{pde} solves. 
For high-dimensional parameter spaces, we suggest to perform either a low-rank approximation of $\pdesolutionoperatorDiscrete$ or $\CovPostDiscrete$ (see \cite{flath2011fast}), or a parameter space reduction (see \cite{cui2016scalable} and Section \ref{sec:results:transient:highdim}) as a pre-processing step to restrict the required number of \ac{pde} solves.

With the \ac{pde} solves outsourced to a preparatory, path-independent step, the dominant computational cost for the evaluation of $\paratoobsDiscrete$ is the computation of the $\dimTime \dimParameter$ measurements $[\mathbf{d}_k]_m \in \mathbb{R}$, each requiring the evaluation of an integral \eqref{eq:paratoobs:discrete} over $\Domain$.
There are several approaches for dealing with this computational cost, amongst others:
To precompute the convolution $\u^h(\tk, \, \cdot \,) \ast \MeasKern( \, \cdot \,, \mathbf{y})$ as a function of $\mathbf{y} \in \Domain$ thereby reducing the evaluation of the integrals to point evaluations;
to save once computed measurements for future iterations of the optimization;
to use known measurements at close locations as approximates;
to limit the integration to the support of $\MeasKern$; 
or to approximate the integral by a pointwise evaluation.
For example, in our numerical demonstration in Section \ref{sec:numerics}, we first approximate measurements of the form \eqref{eq:data:uniform} through point-evaluations of $\u^h(t)$ such that $\paratoobsDiscrete$ can be computed fast for many paths $\ObsPath^h$.
After solving the optimization problem for pointwise measurements, we use the obtained path and its adjoint variables as educated initial guesses for solving the optimization problem again (warm-starting), this time evaluating the integrals \eqref{eq:data:uniform} on a fine mesh of the support of $\MeasKern(\, \cdot \,, \pk^h)$ (see Section \ref{sec:results:transient:warmstart}).

\subsection{Discretized optimization problem}\label{sec:optimization}

Next, we use the discretizations developed above to formulate a discretized version of the optimal experimental design problem \eqref{eq:flightpathOED:abstract} building on \eqref{eq:paratoobs:discrete}--\eqref{eq:CovPost:discrete}. 
Replacing the posterior covariance matrix $\CovPost$ for its discrete approximation $\CovPostDiscrete$ in \eqref{eq:flightpathOED:abstract} we obtain
\begin{subequations}\label{eq:FiniteDimOED}
    \begin{align}
        \mini_{\alpha, \ObsPath^h} \quad & \dps \ObjOED\big(\CovPostDiscrete\big) +
        R(\alpha, \ObsPath^h) \label{eq:FiniteDimOED:a}\\
        \text{subject to} \quad  & \DAEPath_k^h - \DAEPath_{k-1}^h  = \dt f(\DAEPath_{k-1}^h, \alpha, t_{k-1}), \quad \forall k=2,\ldots,n_t, \label{eq:DiscrODE}\\
        & \DAEPath_1^h = \mathbf{p}_0(\alpha) \\
        & 0 \le c(\DAEPath_k^h, t_k), \quad \forall k=1,\ldots,n_t, \\
        & \alpha \in \admissible, 
    \end{align}
\end{subequations}
where we have used explicit Euler time-stepping to illustrate a possible discretization of the \ac{ode} $\dot{\ObsPath} = f(\ObsPath, \alpha, t)$ in \eqref{eq:DiscrODE}.
Naturally, the evaluation of the constraint $\alpha \in \admissible$ and the computation of the regularization term $R(\alpha, \ObsPath)$ depend on the interpretation of the control parameter $\alpha$ within the target application scenario, and \eqref{eq:FiniteDimOED} is thus still general.

To make \eqref{eq:FiniteDimOED} more concrete, we expand upon the control problem \eqref{ex:ode} introduced in Example \ref{ex:ODE} with $\alpha = (\mathbf{x}_0, \theta_0, v, \omega) \in \admissible = \Omega \times \R \times L^2([0, \finalt]) \times L^2([0, \finalt])$ controlling the initial position $\mathbf{x}_0$, the initial heading $\theta_0$, the velocity $v$, and the angular velocity $\omega$ of the measurement platform.
We discretize $v$ and $\omega$ as piecewise constants with $v_k = v(\tk)$ and $\omega_k = \omega(\tk)$ using the same time discretization as for the path $\ObsPath^h$.
To penalize high speeds, sharp turns, and sudden or jerky maneuvers, we define the quadratic regularization function
\begin{align*}
    R(\alpha) &:= \|v^h\|_{L^2([0, \finalt])}^2 +  \|\omega^h\|_{L^2([0, \finalt])}^2 + \|\omega^h\|_{2-var}^2 \\
    &= \dt \sum_{k=1}^{\dimTime-1} v_k^2 + \dt \sum_{k=1}^{\dimTime-1} \omega_k^2 + \sum_{k=1}^{\dimTime-1} (\omega_{k+1}-\omega_k)^2.
\end{align*}
where $\| \, \cdot \, \|_{2-var}$ denotes the 2-variation norm (see \cite[Section 5.1]{friz2010multidimensional}) which measures the accumulated (squared) jump distance of discontinuities.
With this control parameterization and regularization, \eqref{eq:FiniteDimOED} takes the form
\begin{subequations}\label{eq:FiniteDimOED:concrete}
    \begin{align}
        \mini_{v^h, \omega^h, \ObsPath^h} \quad & \dps \ObjOED(\CovPostDiscrete) +
        \dt \sum_{k=1}^{\dimTime-1} v_k^2 + \dt \sum_{k=1}^{\dimTime-1} \omega_k^2 + \sum_{k=1}^{\dimTime-1} (\omega_{k+1}-\omega_k)^2 \label{eq:FiniteDimOED:concrete:a}\\
        \text{subject to} \quad  
        & \ObsPath_1 = \mathbf{x}_0 \label{eq:InitCond}\\
        & \ObsPath_{k} - \ObsPath_{k-1} = \dt \left( \begin{array}{c}
        v_{k-1} \cos(\theta_{k-1}) \\
        v_{k-1} \sin(\theta_{k-1})
        \end{array}\right) \quad \forall k=2,\ldots,n_t \\
        & \theta_1 = \theta_0 \\
        & \theta_k - \theta_{k-1} = \dt \, \omega_{k-1} \quad \forall k=2,\ldots,n_t \\
        & 0 \le c(\DAEPath_k^h, t_k), \quad \forall k=1,\ldots,n_t, \label{eq:PathCons}
    \end{align}
\end{subequations}
where $\theta_2, \dots, \theta_{\dimTime}$ are auxiliary variables, and we have imposed the initial position $\mathbf{x}_0$ and the initial heading $\theta_0$ as constraints without loss of generality.

The discretized optimization problem \eqref{eq:FiniteDimOED} contains two classes of terms. The first class are explicit algebraic constraints, \eqref{eq:InitCond}--\eqref{eq:PathCons}, that represent the discretization of the ODE and the regularization term, $\regularization(\alpha,\ObsPath)$. These terms are handled explicitly by nonlinear optimization solvers such as \ac{ipopt} \cite{wachter2006implementation}. The second class are implicitly-defined terms, namely the OED objective, $\ObjOED(\CovPostDiscrete)$. To evaluate the OED objective for a given control, $\alpha$ and path $\DAEPath$, we first need to assemble the Fisher information matrix, $\Fisher$, and then factorize the posterior, $\CovPostDiscrete$. We can then form the gradient of $\ObjOED(\CovPostDiscrete)$ with respect to $\DAEPath$, by differentiating through the linear system solve. We note, that the cost of evaluating the implicit terms is orders of magnitude larger than the cost of evaluating the explicit terms, and that the evaluation of Hessian or second-order derivatives of the implicit terms would be prohibitive. Hence, we use a quasi-Newton approximation of the Hessian of the Lagrangian, which allows us to solve the optimization problem with first-order information only. Because all functions (explicit and implicit) are smooth, we expect the nonlinear solvers to converge to a stationary point of \eqref{eq:FiniteDimOED}.

\subsection{Gradient computations}\label{sec:gradients:discrete}

The solution of the discretized optimization problem \eqref{eq:FiniteDimOED} requires the computation of the gradient of the \ac{oed} utility value $\ObjOED(\CovPostDiscrete)$ with respect to the positions $\pk^h$, $1\le k \le \dimTime$, that the mobile sensor takes for each time step.
The \ac{oed} objective depends on the eigenvalues of the posterior covariance matrix. In general, eigenvalues are not differentiable with respect to matrix coefficients (whenever the eigenvalue has multiplicity greater than one). However, the trace and determinant of a matrix are differentiable as long as the posterior covariance is invertible (our covariance matrices are positive definite and invertible). The gradients  can be evaluated using the chain rule, as shown in Proposition~\ref{thm:gradient:1}.

\begin{proposition}\label{thm:gradient:1}
    Suppose $\ObsPath^h \subset \Domain$ and $\pdesolutionoperator^h : \parameterspace \rightarrow \mathcal{C}([0, \finalt], L^2(\Domain))$.
    Suppose further that $\MeasKern(\mathbf{x}, \, \cdot \,) \in L^2(\Domain)$ is Lipschitz-continuous with Lipschitz continuity constant $\gamma_{\MeasKern}(\mathbf{x}) \le \overline{\gamma}$ for almost all $x \in \Domain$ and an upper bound $0 < \overline{\gamma} < \infty$.
    For any $k \in \{1, \dots, \dimTime\}$ holds:
    If the parameter-to-observable matrix $\paratoobsDiscrete$ is continuously differentiable with respect to $\pk^h$, then the A- and D-\ac{oed} utility costs $\ObjAOED(\CovPostDiscrete)$ and $\ObjDOED(\CovPostDiscrete)$ are continuously differentiable with respect to $\pk^h$.
    In this case, for direction $i\in \{1, \dots, \dimDomain\}$, 
    \begin{align}
        \left[ \nabla_{\pk} \ObjAOED(\CovPostDiscrete) \right]_i &= -2 \operatorname{trace} \left( \CovPostDiscrete  \paratoobsDiscrete^{\top}\CovNoise^{-1}\frac{d \paratoobsDiscrete}{d \pki} \CovPostDiscrete \right)\label{eq:derivative:A:discrete}
    \end{align}
    for A-\ac{oed} and
    \begin{align}
    \left[\nabla_{\pk} \ObjDOED(\CovPostDiscrete)\right]_i
    &= -2\ObjDOED(\CovPostDiscrete) \operatorname{trace}\left( \paratoobsDiscrete^{\top}\CovNoise^{-1}\frac{d \paratoobsDiscrete}{d \pki} \CovPostDiscrete \right)\label{eq:derivative:D:discrete}
\end{align}
for D-\ac{oed}.
\end{proposition}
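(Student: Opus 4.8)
The plan is to obtain both gradient formulas by applying the chain rule along the composition $\pk^h \mapsto \paratoobsDiscrete \mapsto \CovPostDiscrete^{-1} \mapsto \CovPostDiscrete \mapsto \ObjOED(\CovPostDiscrete)$, and then to simplify the resulting trace expressions using the cyclic invariance of the trace together with the symmetry of $\CovPostDiscrete$. The only genuinely new hypothesis entering is that $\paratoobsDiscrete$ is continuously differentiable in $\pk^h$; everything downstream of $\paratoobsDiscrete$ is handled by standard matrix calculus.

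First I would settle differentiability. By assumption $\paratoobsDiscrete$ is continuously differentiable in $\pk^h$; the map $\mathbf{G} \mapsto \mathbf{G}^{\top}\CovNoise^{-1}\mathbf{G} + \iCovPr$ is a matrix polynomial, hence $C^{\infty}$; by \eqref{eq:Fisher:discrete}--\eqref{eq:CovPost:discrete} its value $\CovPostDiscrete^{-1}$ is symmetric positive definite, so it lies in the open set of invertible matrices on which matrix inversion is $C^{\infty}$ (its eigenvalues stay bounded below by those of $\iCovPr$, exactly as in the corollary to Proposition~\ref{thm:continuity}); and $\tr$ and $\det$ are polynomials in the matrix entries. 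Composing these maps shows $\ObjAOED(\CovPostDiscrete)$ and $\ObjDOED(\CovPostDiscrete)$ are continuously differentiable in $\pk^h$. Note that the eigenvalue non-differentiability warned about in the text is harmless here precisely because $\tr$ and $\det$ are smooth functions of the matrix even when the individual eigenvalues are not. The standing Lipschitz-continuity assumption on $\MeasKern$, through Proposition~\ref{thm:continuity}, is what ensures that this composition is well defined and continuous across $\Domain^{\dimTime}$ to begin with.

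Next I would differentiate the inverse posterior covariance. Writing $\mathbf{M}_i := \paratoobsDiscrete^{\top}\CovNoise^{-1}\,\frac{d\paratoobsDiscrete}{d\pki}$ and using the symmetry of $\CovNoise^{-1}$, differentiation of $\CovPostDiscrete^{-1} = \paratoobsDiscrete^{\top}\CovNoise^{-1}\paratoobsDiscrete + \iCovPr$ in the $i$-th coordinate of $\pk^h$ gives $\frac{d}{d\pki}\CovPostDiscrete^{-1} = \mathbf{M}_i + \mathbf{M}_i^{\top}$. Applying the identity $\frac{d}{dt}\mathbf{A}^{-1} = -\mathbf{A}^{-1}(\frac{d}{dt}\mathbf{A})\mathbf{A}^{-1}$ with $\mathbf{A} = \CovPostDiscrete^{-1}$ then yields $\frac{d}{d\pki}\CovPostDiscrete = -\CovPostDiscrete(\mathbf{M}_i + \mathbf{M}_i^{\top})\CovPostDiscrete$. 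For A-OED I take the trace of this expression; since $\CovPostDiscrete$ is symmetric, $\tr(\CovPostDiscrete\mathbf{M}_i^{\top}\CovPostDiscrete) = \tr\big((\CovPostDiscrete\mathbf{M}_i^{\top}\CovPostDiscrete)^{\top}\big) = \tr(\CovPostDiscrete\mathbf{M}_i\CovPostDiscrete)$, so the two summands coincide and produce the factor $2$ in \eqref{eq:derivative:A:discrete}. For D-OED I apply Jacobi's formula $\frac{d}{dt}\det\mathbf{A} = \det(\mathbf{A})\,\tr(\mathbf{A}^{-1}\frac{d}{dt}\mathbf{A})$ with $\mathbf{A} = \CovPostDiscrete$; this cancels one factor $\CovPostDiscrete$ against $\CovPostDiscrete^{-1}$ and leaves $-\det(\CovPostDiscrete)\,\tr\big((\mathbf{M}_i+\mathbf{M}_i^{\top})\CovPostDiscrete\big)$, after which cyclicity of the trace together with symmetry of $\CovPostDiscrete$ gives $\tr(\mathbf{M}_i^{\top}\CovPostDiscrete) = \tr(\mathbf{M}_i\CovPostDiscrete)$, merging the two terms into \eqref{eq:derivative:D:discrete}.

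Most of this is routine; the steps requiring real care are (i) the rigorous differentiability argument, i.e.\ verifying the matrix being inverted never leaves the invertible set so the chain rule genuinely applies, and (ii) the bookkeeping of the transpose terms $\mathbf{M}_i$ versus $\mathbf{M}_i^{\top}$, whose symmetrization is exactly what produces the correct constants in \eqref{eq:derivative:A:discrete} and \eqref{eq:derivative:D:discrete}. I would also make explicit that the hypothesis that $\paratoobsDiscrete$ is continuously differentiable in $\pk^h$ is the substantive one: it is not automatic, it depends on the regularity of the discrete forward solutions $\pdesolutionoperatorDiscrete(\mbasis)$ and the smoothness of $\MeasKern$ in its second argument, and it is the only place where differentiability—rather than the mere continuity established in Proposition~\ref{thm:continuity}—is needed.
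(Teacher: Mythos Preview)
Your proposal is correct and follows essentially the same route as the paper: differentiate $\CovPostDiscrete^{-1}$ via the product rule, propagate through matrix inversion, and collapse the symmetric pair $\mathbf{M}_i+\mathbf{M}_i^{\top}$ inside the trace to obtain the factor~$2$. The only cosmetic differences are that the paper invokes the scalar-by-matrix identity $\frac{d}{d\pki}\ObjOED(\CovPostDiscrete)=\tr\bigl(\frac{d\ObjOED(X)}{dX}\big|_{X=\CovPostDiscrete}\,\frac{d}{d\pki}\CovPostDiscrete\bigr)$ with $\frac{d\ObjDOED}{dX}=\det(X)X^{-1}$ rather than Jacobi's formula directly, and it is somewhat less explicit than you about the differentiability-of-the-composition argument.
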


\begin{proof}
The proof is provided in \ref{sec:proofs}.
\end{proof}

The critical assumption in Proposition \ref{thm:gradient:1} is the differentiability of $\paratoobsDiscrete$, which depends immediately on the chosen measurement kernel function $\MeasKern$.
The following corollary provides a sufficient criterion.

\begin{corollary}\label{thm:gradient:2}
Let the assumptions of Proposition \ref{thm:gradient:1} hold.
Assume additionally that the measurement kernel function $\MeasKern : \Domain \times \Domain \rightarrow \mathbb{R}$ is continuously differentiable with respect to its second argument with 
\begin{align}
    \esssup_{\mathbf{y} \in \Omega} \| \nabla_{\mathbf{y}} \MeasKern(\cdot, \mathbf{y})\|_{L^2(\Omega)} < \infty.
\end{align}%
Then the parameter-to-observable matrix $\paratoobsDiscrete$ is continuously differentiable with respect to $\pk$ for all $k \in \{1, \dots, \dimTime\}$ with
\begin{align}\label{eq:deriv:G:1}
    \left[\frac{d \paratoobsDiscrete}{d \pki} \right]_{\ell, m} 
&= \delta_{k,\ell} \int_\Domain \pdesolutionoperator^h(\mathbf{e}_m)(\tk[\ell], \mathbf{x}) [\nabla_{\mathbf{y}} \MeasKern(\mathbf{x}, \pk) ]_i ~  d\mathbf{x}.
\end{align}
for all $\ell \in \{1,\dots, \dimTime\}$, $m\in\{1,\dots, \dimParameter\}$, and $i\in \{1,\dots,\dimDomain\}$.
\end{corollary}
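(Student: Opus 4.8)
The plan is to differentiate the entries of $\paratoobsDiscrete$ directly, exploiting that each entry is an $L^2(\Domain)$ pairing of a fixed state snapshot with the kernel, and that the kernel is now assumed $C^1$ in its second argument rather than merely Lipschitz. First I would read off from \eqref{eq:paratoobs:discrete} that $[\paratoobsDiscrete]_{\ell,m} = \int_\Domain \pdesolutionoperatorDiscrete(\mbasis[m])(\tk[\ell], \mathbf{x})\,\MeasKern(\mathbf{x}, \mathbf{p}_\ell^h)\, d\mathbf{x}$ depends on $\pk^h$ only through the row $\ell = k$, so $\partial [\paratoobsDiscrete]_{\ell,m}/\partial\pki = 0$ for $\ell \neq k$; this accounts for the Kronecker factor $\delta_{k,\ell}$ in \eqref{eq:deriv:G:1}. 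It then suffices to fix $m \in \{1,\dots,\dimParameter\}$, set $u := \pdesolutionoperatorDiscrete(\mbasis[m])(\tk,\cdot) \in L^2(\Domain)$ (this lies in $L^2(\Domain)$ by the hypothesis $\pdesolutionoperatorDiscrete : \parameterspace \to \mathcal{C}([0,\finalt],L^2(\Domain))$), and to study $F(\mathbf{y}) := \langle u,\, \MeasKern(\cdot,\mathbf{y})\rangle_{L^2(\Domain)}$ for $\mathbf{y}$ in a ball $B_\rho(\pk) \subset \Domain$, which exists since $\Domain$ is open and $\pk \in \Domain$. The argument parallels the proof of Proposition~\ref{thm:continuity} but with differentiability replacing continuity.

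For the partial derivative in a coordinate direction $\mathbf{e}_i \in \R^{\dimDomain}$ and $|h| < \rho$, I would write the increment of the kernel via the fundamental theorem of calculus, $\MeasKern(\mathbf{x},\mathbf{y}+h\mathbf{e}_i) - \MeasKern(\mathbf{x},\mathbf{y}) = h\int_0^1 [\nabla_{\mathbf{y}}\MeasKern(\mathbf{x},\mathbf{y}+sh\mathbf{e}_i)]_i\, ds$, valid for a.e.\ $\mathbf{x}$ by the assumed $C^1$-regularity of $\MeasKern(\mathbf{x},\cdot)$. Substituting into the difference quotient of $F$ and applying Fubini's theorem --- whose integrability hypothesis holds because Cauchy--Schwarz together with $\esssup_{\mathbf{z}\in\Domain}\|\nabla_{\mathbf{y}}\MeasKern(\cdot,\mathbf{z})\|_{L^2(\Domain)} < \infty$ gives $\int_0^1\!\int_\Domain |u(\mathbf{x})|\,|[\nabla_{\mathbf{y}}\MeasKern(\mathbf{x},\mathbf{y}+sh\mathbf{e}_i)]_i|\,d\mathbf{x}\,ds < \infty$, and joint measurability of $(\mathbf{x},s)\mapsto[\nabla_{\mathbf{y}}\MeasKern(\mathbf{x},\mathbf{y}+sh\mathbf{e}_i)]_i$ follows since it is an a.e.\ pointwise limit of jointly measurable difference quotients --- yields
\[
\frac{F(\mathbf{y}+h\mathbf{e}_i)-F(\mathbf{y})}{h} = \int_0^1 \phi_i(sh)\, ds, \qquad \phi_i(\tau) := \left\langle u,\; [\nabla_{\mathbf{y}}\MeasKern(\cdot,\mathbf{y}+\tau\mathbf{e}_i)]_i\right\rangle_{L^2(\Domain)}.
\]
Since $\big|\int_0^1\phi_i(sh)\,ds - \phi_i(0)\big| \le \sup_{|\tau|\le|h|}|\phi_i(\tau)-\phi_i(0)|$, the partial derivative of $F$ at $\mathbf{y}$ in direction $\mathbf{e}_i$ exists and equals $\phi_i(0) = \int_\Domain u(\mathbf{x})[\nabla_{\mathbf{y}}\MeasKern(\mathbf{x},\mathbf{y})]_i\, d\mathbf{x}$ as soon as $\phi_i$ is continuous at $0$; specializing $\mathbf{y}=\pk$ and summing the $\delta_{k,\ell}$-reduction over $m$ then produces exactly \eqref{eq:deriv:G:1}, and continuity of $\mathbf{y}\mapsto\langle u,[\nabla_{\mathbf{y}}\MeasKern(\cdot,\mathbf{y})]_i\rangle_{L^2(\Domain)}$ on $B_\rho(\pk)$ upgrades this to $F \in C^1$.

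It therefore remains to prove continuity of $\tau \mapsto \phi_i(\tau)$ (and, by the same computation at a varying base point, of $\mathbf{y}\mapsto\langle u,[\nabla_{\mathbf{y}}\MeasKern(\cdot,\mathbf{y})]_i\rangle$). Here I would use that the family $\{[\nabla_{\mathbf{y}}\MeasKern(\cdot,\mathbf{y}+\tau\mathbf{e}_i)]_i\}_\tau$ is bounded in $L^2(\Domain)$ by the same $\esssup$ bound and converges pointwise a.e.\ as $\tau\to\tau_0$ by continuity of $\mathbf{z}\mapsto\nabla_{\mathbf{y}}\MeasKern(\mathbf{x},\mathbf{z})$ for a.e.\ $\mathbf{x}$; a bounded, a.e.-convergent family in $L^2$ over the finite-measure set $\Domain$ converges weakly in $L^2(\Domain)$ to its a.e.\ limit (a standard fact, provable via Fatou plus Egorov plus absolute continuity of the integral). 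Pairing against the fixed $u \in L^2(\Domain)$ gives $\phi_i(\tau)\to\phi_i(\tau_0)$, which closes the argument. The main obstacle is exactly this interchange of differentiation and integration: the hypothesis controls only the $L^2(\Domain)$-norm of $\nabla_{\mathbf{y}}\MeasKern$ uniformly in $\mathbf{y}$, not a pointwise-in-$\mathbf{x}$ dominating function, so the classical Leibniz rule does not apply directly and the Fubini-plus-weak-$L^2$ route above is needed (an alternative would be to strengthen the hypothesis to a local domination $|\nabla_{\mathbf{y}}\MeasKern(\mathbf{x},\mathbf{y})| \le G(\mathbf{x})$ with $G \in L^2(\Domain)$ for $\mathbf{y}$ near $\pk$, which would permit a plain dominated-convergence argument instead).
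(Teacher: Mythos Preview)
Your proof is correct and follows the same skeleton as the paper's: reduce to the row $\ell = k$ via the $\delta_{k,\ell}$ observation, then differentiate the integral in \eqref{eq:paratoobs:discrete} with respect to $\pki$. The paper's proof simply invokes the Leibniz rule at that point without further detail, whereas you supply a full justification via the fundamental theorem of calculus, Fubini, and a weak-$L^2$ continuity argument for $\phi_i$.

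Your justification is valid, but the obstacle you flag---that the hypotheses bound only $\|\nabla_{\mathbf{y}}\MeasKern(\cdot,\mathbf{y})\|_{L^2(\Domain)}$ and not a pointwise dominating function---is not actually present. The corollary inherits the assumptions of Proposition~\ref{thm:gradient:1}, among them that $\MeasKern(\mathbf{x},\cdot)$ is Lipschitz with constant $\gamma_{\MeasKern}(\mathbf{x}) \le \overline{\gamma}$ for a.e.\ $\mathbf{x}$. Since a $C^1$ function has its gradient bounded by its Lipschitz constant, this already yields the pointwise bound $\bigl|[\nabla_{\mathbf{y}}\MeasKern(\mathbf{x},\mathbf{y})]_i\bigr| \le \overline{\gamma}$ for a.e.\ $\mathbf{x}$ and all $\mathbf{y} \in \Domain$. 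The constant function $\overline{\gamma}$ lies in $L^2(\Domain)$ because $\Domain$ is bounded, and after pairing with $u \in L^2(\Domain)$ it furnishes exactly the integrable majorant required by the classical dominated-convergence form of the Leibniz rule. So your Egorov/weak-convergence detour, while sound, is unnecessary; this is presumably what the paper's terse proof relies on, and it is also the ``alternative'' route you mention at the end.
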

\begin{proof}
The proof is provided in \ref{sec:proofs}.
\end{proof}

Global, continuous differentiability of $\MeasKern$ is a strong assumption; it can be weakened if the parameter-to-state map $\pdesolutionoperator^h$ is sufficiently smooth. 

\begin{corollary}\label{thm:gradient:3}
Suppose $\ObsPath^h \subset \Domain$ and $\pdesolutionoperator^h : \parameterspace \rightarrow \mathcal{C}([0, \finalt], H^1(\Domain))$.
Suppose further that $\MeasKern$ is of the form
\begin{align}\label{eq:measkern:alternative}
    \MeasKern(\textbf{x}, \textbf{y}) = c_{\phi} \mathbbm{1}_{B_r(\mathbf{y})} \phi(\mathbf{x}-\mathbf{y})
\end{align}
where $c_{\phi} > 0$ is a scaling constant, $B_r(\mathbf{y})$ the ball of radius $r >0$ around the point $\mathbf{y}$, $\mathbbm{1}_{B_r(\mathbf{y})}$ is the indicator function, and $\phi \in C(\overline{B_r(\mathbf{0})})$ is continuous on the closed ball $\overline{B_r(\mathbf{0})} \subset \R^{\dimDomain}$.
If $r > 0$ is small enough such that $B_r(\pk^h) \subset \Domain$ for all $k \in \{1, \dots, \dimTime\}$, then the parameter-to-observable matrix $\paratoobsDiscrete$ is continuously differentiable with respect to $\pk^h$ for all $k \in \{1, \dots, \dimTime\}$ with
    \begin{align}\label{eq:deriv:G:2}
        \left[\frac{d \paratoobsDiscrete}{d \pki} \right]_{\ell, m} 
    &= \delta_{k, \ell} \int_{B_r(0)} \frac{\partial \pdesolutionoperator^h(\m)}{\partial \mathbf{x}_i}(\tk, \mathbf{x} + \pk) \phi(\mathbf{x})d\mathbf{x}
    \end{align}
\end{corollary}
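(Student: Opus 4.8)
The plan is to exploit the product structure of the kernel \eqref{eq:measkern:alternative} to move the dependence on $\pk^h$ out of the (one-sidedly) non-smooth indicator $\mathbbm{1}_{B_r(\mathbf{y})}$ and into the argument of the state, so that differentiability rests on the $H^1$-regularity of $\pdesolutionoperator^h(\mbasis[m])(\tk,\cdot)$ rather than on smoothness of $\MeasKern$ --- the latter fails here, so Corollary~\ref{thm:gradient:2} is not applicable and is instead replaced by this argument. First, by \eqref{eq:paratoobs:discrete} the entry $[\paratoobsDiscrete]_{\ell,m}$ involves $\MeasKern(\cdot,\pk[\ell]^h)$ and hence depends on $\pk^h$ only when $\ell=k$; this gives the factor $\delta_{k,\ell}$ and reduces everything to $[\paratoobsDiscrete]_{k,m}$. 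Reading the hypothesis ``$r$ small enough that $B_r(\pk^h)\subset\Domain$'' as the compact containment $\overline{B_r(\pk^h)}\subset\Domain$ (always arrangeable by shrinking $r$), there is a neighborhood $U$ of $\pk^h$ with $B_r(\mathbf{y})\subset\Domain$ for all $\mathbf{y}\in U$; then the indicator only restricts the integration domain, and substituting $\mathbf{z}=\mathbf{x}-\mathbf{y}$ gives, for $\mathbf{y}\in U$,
\[
  F_m(\mathbf{y}) := \int_{\Domain}\pdesolutionoperator^h(\mbasis[m])(\tk,\mathbf{x})\,\MeasKern(\mathbf{x},\mathbf{y})\,d\mathbf{x} = c_{\phi}\int_{B_r(0)}\pdesolutionoperator^h(\mbasis[m])(\tk,\mathbf{z}+\mathbf{y})\,\phi(\mathbf{z})\,d\mathbf{z},
\]
so $[\paratoobsDiscrete]_{k,m}=F_m(\pk^h)$ and all of the $\mathbf{y}$-dependence now sits in the shifted state.

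The heart of the proof will be to differentiate $F_m$ through the integral. Fix $u:=\pdesolutionoperator^h(\mbasis[m])(\tk,\cdot)\in H^1(\Domain)$ and note $\phi\in C(\overline{B_r(0)})\subset L^2(B_r(0))$. Shrinking $U$ to a small closed ball so that $\Domain':=\bigcup_{\mathbf{y}\in U}B_r(\mathbf{y})$ is compactly contained in $\Domain$, I would mollify: $u_n:=u\ast\rho_{1/n}$ is smooth near $\overline{\Domain'}$ for $n$ large with $u_n\to u$ and $\nabla u_n\to\nabla u$ in $L^2(\Domain')$. For the smooth $u_n$, classical differentiation under the integral sign gives $F_{m,n}(\mathbf{y}):=c_{\phi}\int_{B_r(0)}u_n(\mathbf{z}+\mathbf{y})\phi(\mathbf{z})\,d\mathbf{z}\in C^1(U)$ with $\partial_{y_i}F_{m,n}(\mathbf{y})=c_{\phi}\int_{B_r(0)}\partial_{x_i}u_n(\mathbf{z}+\mathbf{y})\phi(\mathbf{z})\,d\mathbf{z}$. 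By Cauchy--Schwarz,
\[
  \sup_{\mathbf{y}\in U}\big|F_{m,n}(\mathbf{y})-F_m(\mathbf{y})\big| \le c_{\phi}\,\|\phi\|_{L^2(B_r(0))}\,\|u_n-u\|_{L^2(\Domain')}\ \xrightarrow[n\to\infty]{}\ 0,
\]
and likewise $\partial_{y_i}F_{m,n}\to G_{m,i}$ uniformly on $U$ with $G_{m,i}(\mathbf{y}):=c_{\phi}\int_{B_r(0)}\partial_{x_i}u(\mathbf{z}+\mathbf{y})\phi(\mathbf{z})\,d\mathbf{z}$. The standard theorem on uniformly convergent sequences of $C^1$ functions with uniformly convergent derivatives then yields $F_m\in C^1(U)$ with $\partial_{y_i}F_m=G_{m,i}$, and continuity of $G_{m,i}$ follows from continuity of translation in $L^2$ applied to $\nabla u$. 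Thus $F_m$ is genuinely $C^1$ near $\pk^h$, and assembling with the $\delta_{k,\ell}$ from above shows $\paratoobsDiscrete$ is continuously differentiable in $\pk^h$ with $\big[\tfrac{d\paratoobsDiscrete}{d\pki}\big]_{\ell,m}=\delta_{k,\ell}\,c_{\phi}\int_{B_r(0)}\partial_{x_i}\pdesolutionoperator^h(\mbasis[m])(\tk,\mathbf{z}+\pk^h)\phi(\mathbf{z})\,d\mathbf{z}$, which is \eqref{eq:deriv:G:2} (carrying the constant $c_{\phi}$).

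The main obstacle is precisely this differentiation step: since $u$ is only $H^1$, it is not classically differentiable and one cannot naively pull $\partial_{x_i}$ inside the integral, so the mollification together with the two Cauchy--Schwarz estimates (for $F_{m,n}\to F_m$ and for $\partial_{y_i}F_{m,n}\to G_{m,i}$) is what makes the argument go through; an equivalent phrasing is that $\mathbf{y}\mapsto u(\cdot+\mathbf{y})$ is $C^1$ as a map into $L^2(B_r(0))$ with derivative $\nabla u(\cdot+\mathbf{y})$, after which $F_m(\mathbf{y})=c_{\phi}\langle u(\cdot+\mathbf{y}),\phi\rangle_{L^2(B_r(0))}$ inherits $C^1$-regularity by composition with the bounded functional $\langle\cdot,\phi\rangle$. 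The only other point requiring care is the geometric one already noted: one needs $\overline{B_r(\pk^h)}\subset\Domain$ (not merely the open ball), so that the translated states stay defined over a full neighborhood of $\pk^h$; otherwise only a one-sided derivative is available.
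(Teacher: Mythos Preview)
Your approach is essentially the same as the paper's: both substitute $\mathbf{z}=\mathbf{x}-\mathbf{y}$ to shift the $\pk^h$-dependence from the non-smooth indicator into the argument of the state, then differentiate under the integral to arrive at \eqref{eq:deriv:G:2}. Your version is considerably more careful --- the paper simply writes down the derivative formula and appeals to $H^1$-regularity for boundedness and to continuity of $\phi$ for continuity, whereas you justify the differentiation rigorously via mollification and uniform $C^1$-convergence, and you also correctly flag two points the paper glosses over: the constant $c_\phi$ that should persist in \eqref{eq:deriv:G:2}, and the need for compact containment $\overline{B_r(\pk^h)}\subset\Domain$ to have a genuine neighborhood for differentiation.
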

\begin{proof}
The proof is provided in \ref{sec:proofs}.
\end{proof}

Naturally, the numerical evaluation of the integrals in \eqref{eq:deriv:G:1} (or \eqref{eq:deriv:G:2}) depends on the chosen discretization, and can hence become computationally expensive. 
In addition, the computation of $\nabla_{\pk} \ObjAOED(\CovPostDiscrete)$ or $\nabla_{\pk} \ObjDOED(\CovPostDiscrete)$ through \eqref{eq:derivative:A:discrete} (for A-\ac{oed}) or \eqref{eq:derivative:D:discrete} (for D-\ac{oed})---even after computing $\frac{d \paratoobsDiscrete}{d \pki}$---further involves the repeated evaluation of the involved matrices' actions to compute the trace.
While some of these computations are shared among all time steps $\tk$, the action of the derivative $\frac{d \paratoobsDiscrete}{d \pki}$ differs, resulting in a total of $\dimTime \dimParameter$ different computations of matrix traces.
To reduce computational costs, we suggest warm-starting the optimization problem with coarser discretizations and pointwise measurements to identify suitable initial guesses before changing to the intended discretization and measurement kernel function $\MeasKern$.
This procedure is demonstrated in Section \ref{sec:results:transient:warmstart}.

\section{Numerical Results}\label{sec:numerics}

In this section, we demonstrate our \ac{oed} algorithm for moving sensors.
The forward model is described in Section \ref{sec:results:forward}.
In Section \ref{sec:results:transient:2D}, we visualize the A- and D-\ac{oed} utility functions w.r.t. the initial heading and (constant) angular velocity control parameters and choose local minima as initial guesses for the optimization.
In Section \ref{sec:results:transient:oneshot}, we present the results of the optimization.
In Section \ref{sec:results:transient:warmstart}, we demonstrate how our optimization algorithm can be warm-started to refine the discretization of the measurements.
Finally, in Section \ref{sec:results:transient:highdim}, we increase the parameter dimension to $\dimParameter = 25$ to show the scaling of our algorithm.

\subsection{Forward model}\label{sec:results:forward}

We consider the spread of a pollutant in a domain $\Domain \subset (0,1) \times (0,1)$ modeled by the convection-diffusion equation
\begin{align}\label{eq:transient}
    \frac{d\u}{dt} &= 10^{-3} \Delta \u - \mathbf{v} \cdot \nabla \u & \text{in } \Domain,
\end{align}
with initial condition $\u(0) = \u_0$ and Neumann boundary conditions $\nabla \u \cdot \mathbf{n} = 0$ on $\partial \Omega$ with $\mathbf{n} : \partial \Domain \rightarrow \mathbb{R}^2$ being the outside-pointing unit normal.
The domain $\Domain$ and the velocity field $\mathbf{v} : \Domain \rightarrow \mathbb{R}^2$ are shown in Figure \ref{fig:velocity-field}.
The velocity field $\mathbf{v}$, together with the pressure $p : \Omega \rightarrow \mathbb{R}$, is described through the steady-state Navier-Stokes equations
\begin{align}\label{eq:Navier-Stokes}
    \frac{1}{500} \Delta \mathbf{v} + \nabla p + \mathbf{v} \cdot \nabla \mathbf{v} &= 0 & \text{in } \Domain, \\
    \nabla \cdot \mathbf{v} &= 0 & \text{in } \Domain,
\end{align}
with imposed Dirichlet boundary conditions: on the left wall, the velocity field is oriented in $\mathbf{x}_2$-direction with $\mathbf{v}(0,\mathbf{x}_2) = (0, 1)^{\top}$ for all $\mathbf{x}_2\in (0,1)$; on the right wall, it is oriented in the reverse direction with $\mathbf{v}(1,\mathbf{x}_2) = (0, -1)^{\top}$ for all $\mathbf{x}_2\in (0,1)$; on the remaining boundaries it is zero, i.e., $\mathbf{v}(\mathbf{x}) = (0,0)^{\top}$ for $\mathbf{x} = (\mathbf{x}_1, \mathbf{x}_2)^{\top}\in \partial \Domain$, $\mathbf{x}_1\notin \{0, 1\}$.

\begin{figure}
    \centering
    \includegraphics[width=0.5\linewidth]{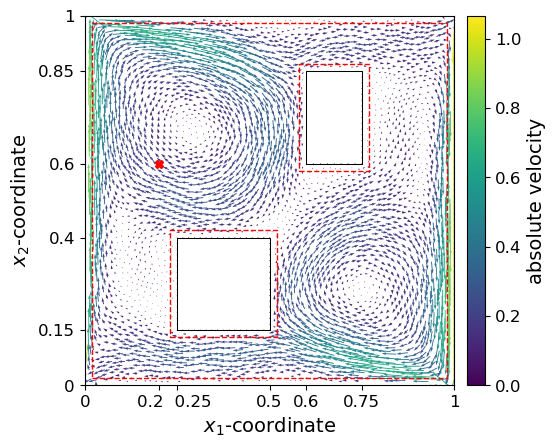}
    \caption{Computational domain $\Domain$ and velocity field $\mathbf{v}$. 
    Red dashed lines demarcate the the permitted flight area.
    The red x-marker is the starting position for the moving sensor at time $t=0$.
    }
    \label{fig:velocity-field}
\end{figure}

We discretize the convection-diffusion equation \eqref{eq:transient} in space using cubic finite elements with 32,096 total degrees of freedom, and solve it for $t \in [0, 5]$ using Crank-Nicolson time stepping with uniform time steps $t_k = k \Delta t$, $k=0, \dots, 5001$ and step size $\Delta t = 0.001$.
Employing a sparse LU-factorization of the discretized convection-diffusion operator, a forward solve with a given initial condition $\u(0) = \u_0$ takes approximately 70 s.
Because there is no forcing term in \eqref{eq:transient}, the Crank-Nicolson time stepping is equivalent to the continuous Galerkin space-time finite element method with piecewise linear basis functions in time (see \cite[Chapter 12]{Thomee2007}), which allows us to evaluate $\u(t) = \frac{t-s_{i}}{\Delta s}\u(s_{i+1}) + (1-\frac{t-s_{i}}{\Delta s}) u(s_{i})$ for any $t \in (t_k, t_{k+1})$ between any two time steps $t_k$ and $t_{k+1}$.

To formulate our \ac{oed} problem, we consider the following scenario in Sections \ref{sec:results:transient:2D} -- \ref{sec:results:transient:warmstart}:
It is known that, at time $t=0$, a pollutant is released around two positions $\mathbf{c}_1 := (0.1, 0.9)$ and $\mathbf{c}_2 := (0.7, 0.1)$
such that the initial state $\u_0$ takes the form
\begin{align}
    \u_0(\mathbf{x}; \m) = \m_1 \min\{\exp(-100\|\mathbf{x}-\mathbf{c}_1\|^2), \frac12\} + \m_2 \min\{\exp(-100\|\mathbf{x}-\mathbf{c}_2\|^2), \frac12\}.
\end{align}
However, the amount of released pollutant is unknown, and we model the scaling vector $\m = (\m_1, \m_2)^{\top} \in \mathbb{R}^2$ as a random variable with prior $\mathcal{N}((1,1)^{\top}, \Identity[2])$.
A negative entry in $\m$ and consequently negative concentrations in the state $\u(\m)$ are interpreted as the release of a counter-agent that neutralizes itself and the pollutant in a 1:1 ratio.
For reference, the evolution of the state $\u$ starting from $\u(0)=\u_0(\,\cdot\,; \m)$ is shown in Figure \ref{fig:transient:sol} for $\m = \mbasis[1]$ (top) and $\m = \mbasis[2]$ (bottom).

\begin{figure}
    \centering
    \includegraphics[width=\linewidth]{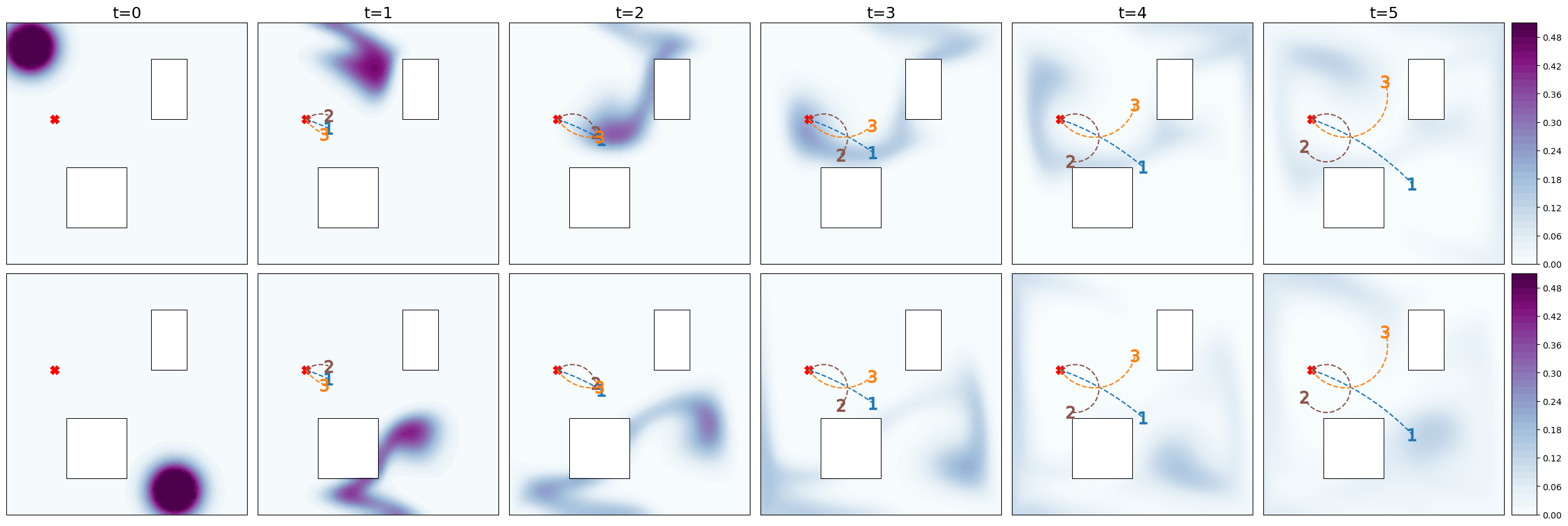}
    \caption{Solution of the \ac{pde} for two initial conditions (top: $\m = \mbasis[1]$, bottom: $\m = \mbasis[2]$) at time steps $t=0, 1, \dots, 5$.
    The numbered paths are local minima explained in section \ref{sec:results:transient:2D} and
    marked in Figure \ref{fig:utility2D:D}.
    }
    \label{fig:transient:sol}
\end{figure}

The moving sensor starts at time $t=0$ from the position $(0.2, 0.6)^{\top}$, marked with a red cross in Figures \ref{fig:velocity-field} and \ref{fig:transient:sol}.
The sensor measures the concentration of the pollutant pointwise\footnote{We use pointwise measurements as a first approximation to generate an initial guess for the path $\ObsPath$. We switch to measurements of the form \eqref{eq:data:uniform} in Section \ref{sec:results:transient:warmstart}.}
at its own position $\ObsPath(t) \in \Domain$, with noise distributed as described in Example \ref{ex:noise:2}.
For the three paths marked in Figure \ref{fig:transient:sol}, the signal-to-noise ratio $10 \log_{10}(\frac{\mathbb{E}_{\m}[\d(\m)(t)^2]}{\mathbb{E}_{\eta}[\eta(t)^2]})$ with this noise model varies from approximately 12 dB to 16 dB (estimated from 10,000 samples of the prior and the noise model).
The path $\ObsPath$ of the moving sensor is modeled through the \ac{oed} \eqref{ex:ode} with constant velocity $v \in [0.05, 0.2]$, angular velocity $\omega(t) \in [-2, 2]$, and initial heading $\theta(0) \in [-\pi, \pi]$ to be determined in the optimization.
The \ac{ode} is discretized initially on the time interval $[0, 5]$ with time step size $\dt = 10^{-2}$; convergence for $\dt \rightarrow 0$ is investigated in Section \ref{sec:results:transient:warmstart}.
We require the sensor to keep a safety distance of at least 0.02 to the edge $\partial \Domain$ of the domain, as indicated by the red dashed lines in Figure \ref{fig:velocity-field}.

\subsection{Initial heading vs angular velocity}\label{sec:results:transient:2D}

\begin{figure}
    \centering
    \begin{subfigure}[t]{0.495\textwidth}
    \includegraphics[width=\linewidth]{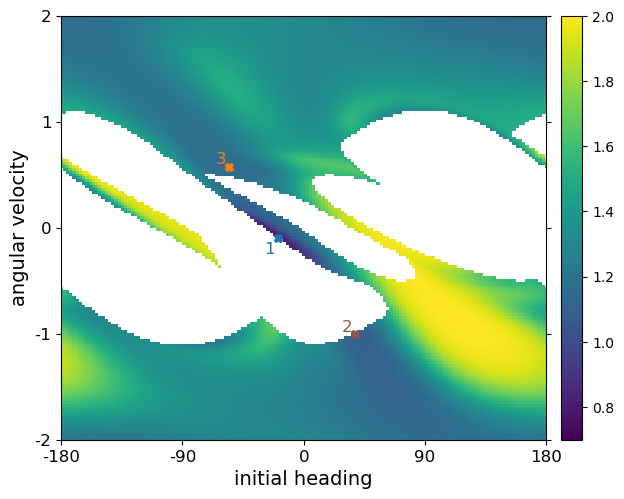}
    \caption{A-OED utility}\label{fig:utility2D:A}
    \end{subfigure}
    \begin{subfigure}[t]{0.495\textwidth}
    \includegraphics[width=\linewidth]{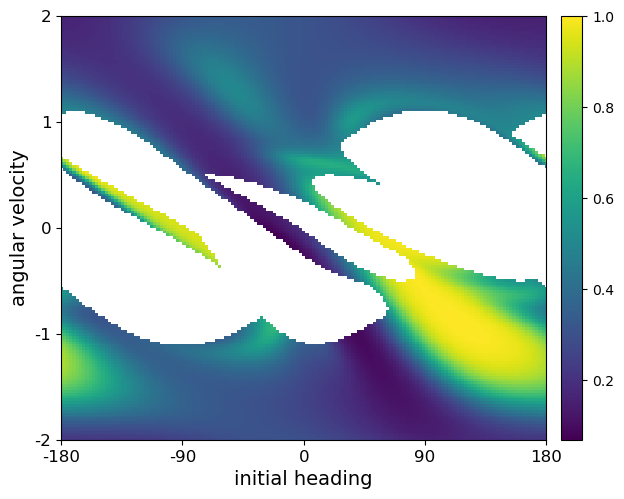}
    \caption{D-OED utility}\label{fig:utility2D:D}
    \end{subfigure}
    \caption{
    A- and D-\ac{oed} utility values for paths characterized by varying initial heading and angular velocity (constant in time). 
    Initial position and velocity $\velocity(t) \equiv 0.1$ are fixed. 
    White areas correspond to paths outside the admissible domain.
    The colored and numbered x-markers indicate selected local minima of the A-\ac{oed} utility function used to initialize the optimization.
    The corresponding paths are shown in Figure \ref{fig:transient:sol}.
    }
    \label{fig:utility2D}
\end{figure}

As a first step in optimizing the path $\ObsPath$ of the moving sensor, we consider curves only with constant velocity $\velocity(t) \equiv 0.1$ and a constant angular velocity $\angular \equiv \angular_0 \in [-2, 2]$.
Figure \ref{fig:utility2D} shows the A- and D-OED utility functions for varying $\angular_0 \in [-2, 2]$ and initial heading $\heading(0) \in [-\pi, \pi]$.
The figure clearly shows that the admissible domain $\admissible$ for the control parameterization is non-trivial and that the A- and D-\ac{oed} utility functions are non-convex, exhibiting many local minima.

\begin{figure}
    \centering
    \includegraphics[width=\linewidth]{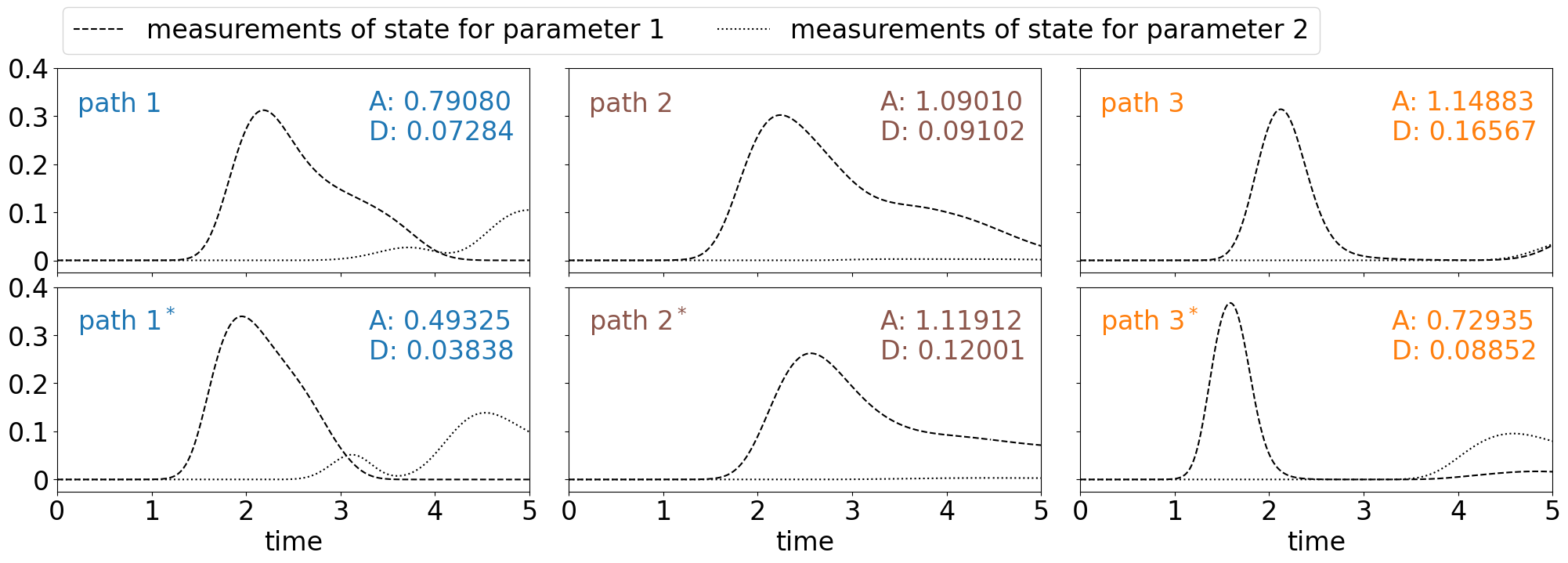}
    \caption{Measurements of the states $\u(\mbasis[1])$ (dashed) and $\u(\mbasis[2])$ (dotted) along the initial path guesses (top) and the optimized paths (bottom).}
    \label{fig:measurements-by-parameter}
\end{figure}

For further investigation, we choose the three local minima of $\ObjAOED$ indicated in Figure \ref{fig:utility2D:A}.
Their associated paths are shown in Figure \ref{fig:transient:sol}.
All three paths lead the moving sensor into a high-concentration pollutant cloud originating from the first, top left source around $t=2$. 
Afterwards, on path 1, the moving sensor passes between the two houses and enters the bottom right of the domain, ending, at final time $t=5$, in a cloud of pollutant originating from the second source on the bottom right.
Consequently, the sensor obtains large and distinct measurements of each pollutant source, leading to small A- and D-\ac{oed} utility values (see Figure \ref{fig:measurements-by-parameter}, top left).
In contrast, on path 2, the sensor moves clockwise in a small circle, well tracking the first cloud of pollutant while only encountering negligible measurements ($\le 0.003$) from the second source.
As will be discussed in Section \ref{sec:results:transient:oneshot} below, this missing awareness to the second pollutant source makes path 2 a difficult initial guess for the optimization algorithm.
In comparison, path 3 moves counterclockwise through the first pollutant cloud in order to then re-enter a second pollutant cloud containing pollutant from both sources.
The optimization algorithm can then steer the moving sensor to further increase measurements of $\u(\mbasis[2])$ (see Figure \ref{fig:measurements-by-parameter}, top right), making path 3 a more valuable initial guess than path 2. 

\subsection{Optimization from initial guess}\label{sec:results:transient:oneshot}

We use paths 1, 2, 3 as initial guesses for the optimization.
Specifically, we use the \ac{ipopt} (\cite{wachter2006implementation}) python package to solve \eqref{eq:FiniteDimOED} with the A-\ac{oed} utility function $\ObjOED=\ObjAOED$ and regularization weight $\gamma=0.1$.
The regularization weight was chosen such that the percentage of the regularization term in the total cost function is below $10\%$ for all three initial guesses.
The optimization is constrained to enforce the initial sensor position $\ObsPath(0) = (0.2, 0.6)^{\top}$, and for the path to remain a safety distance of $0.02$ away from the domain boundary (see red dashed lines in Figure \ref{fig:velocity-field}).
Furthermore, we impose that the velocity $\velocity$ is constant with initial value $0.05 \le \velocity(0) \le 0.2$.
We use the Quasi-Newton optimization with the IPOPT default options; in particular, the optimization terminates when the \ac{nlp} error is either below $10^{-8}$ (``optimal''), or remained below $10^{-6}$ for 15 consecutive iterations (``acceptable'').
For the time step size $\dt = 10^{-2}$, the optimization problem has 2,508 optimization variables, 2,005 equality and 1,002 inequality constraints.

\begin{table}[t]
    \centering
    \begin{tabular}{ll|rrr}
         & initial guess & path 1 & path 2 & path 3 \\
         \hline
        cost & at initialization & 0.854 & 1.216 & 1.232 \\
        & after optimization & 0.611 & 1.149 & 0.974 \\
        & change (absolute) & -0.243 & -0.067 & -0.258 \\
        & change (relative) & $\times$0.715 & $\times$0.945 & $\times$0.791 \\
        \hline
        A-\ac{oed} & initial & 0.791 & 1.090 & 1.149 \\
        & after & 0.493 & 1.119 & 0.729 \\
        & change (absolute) & -0.298 & +0.029 & -0.419 \\
        & change (relative) & $\times$0.624 & $\times$1.027 & $\times$0.635 \\
        \hline
        D-\ac{oed} & initial & 0.073 & 0.091 & 0.166 \\
        & after & 0.038 & 0.120 & 0.089 \\
        & change (absolute) & -0.034 & +0.029 & -0.077 \\
        & change (relative) & $\times$0.527 & $\times$1.319 & $\times$0.534 \\
        \hline
        stats & no. iterations & 286 & 65 & 886 \\
        & no. fct. evals & 770 & 74 & 5161 \\
        & no. gradient evals & 287 & 66 & 905 \\
        \hline
        finish & compute time [s] & 49.8 & 9.8 & 233.1 \\
        & termination & optimal & optimal & accept. \\
        & constr. violation & 4.14e-11  & 1.85e-14 & 2.97e-11 \\
        & \acs{nlp} error & 2.07e-9 & 2.34e-9 & 5.84e-7 \\
    \end{tabular}
    \caption{Optimization results (cost function \eqref{eq:FiniteDimOED:concrete:a} with $\ObjOED = \ObjAOED$), starting from the local minima selected in Figure \ref{fig:utility2D:A}.
    }
    \label{tab:optimization:one}
\end{table}

The key output data for the optimization starting from the initialization paths 1, 2, 3 are listed in Table \ref{tab:optimization:one}.
In all three cases, the optimization terminated successfully, and the cost functions were reduced.
For paths 1 and 3, the the A-\ac{oed} utility decreased to below $64\%$ (lower is better).
The D-\ac{oed} utility values, although not directly part of the cost function, also decreased, illustrating the connection of $\ObjAOED$ and $\ObjDOED$ through the eigenvalues of the posterior covariance matrix.
In contrast, for path 2, only the cost function decreased while both utility values increased, i.e., the optimization algorithm primarily decreased the regularization term.
This behavior indicates that path 2 was already close to a local minimum of $\ObjAOED$.
This explanation is further supported by the comparatively fast convergence of the optimization algorithm for the initial guess 2 (65 iterations for initial guess 2 compared to 286 iterations for initial guess 1).

It is notable that for path 3, the optimization took more than three times as many iterations as for path 1, and only converged to acceptable precision.
The slower convergence can be explained by the shape of the identified path, shown in Figure \eqref{fig:optimization:paths:1shot:A}:
The optimized sensor turns quickly and steers counter-clockwise along the outer edge of the admissible domain until, at the final time $\finalt = 5$, it is on the boundary $\partial \admissibleDomain$.
This maneuver exploits the clockwise, high velocity circulation along the outer boundary of the domain in order to obtain measurements of pollutant originating from the second source at the bottom right.
However, the closeness to the edge of the admissible domain causes convergence problems in the constraints, as reflected by the increased runtime.

\begin{figure}
    \centering
    \includegraphics[width=\linewidth]{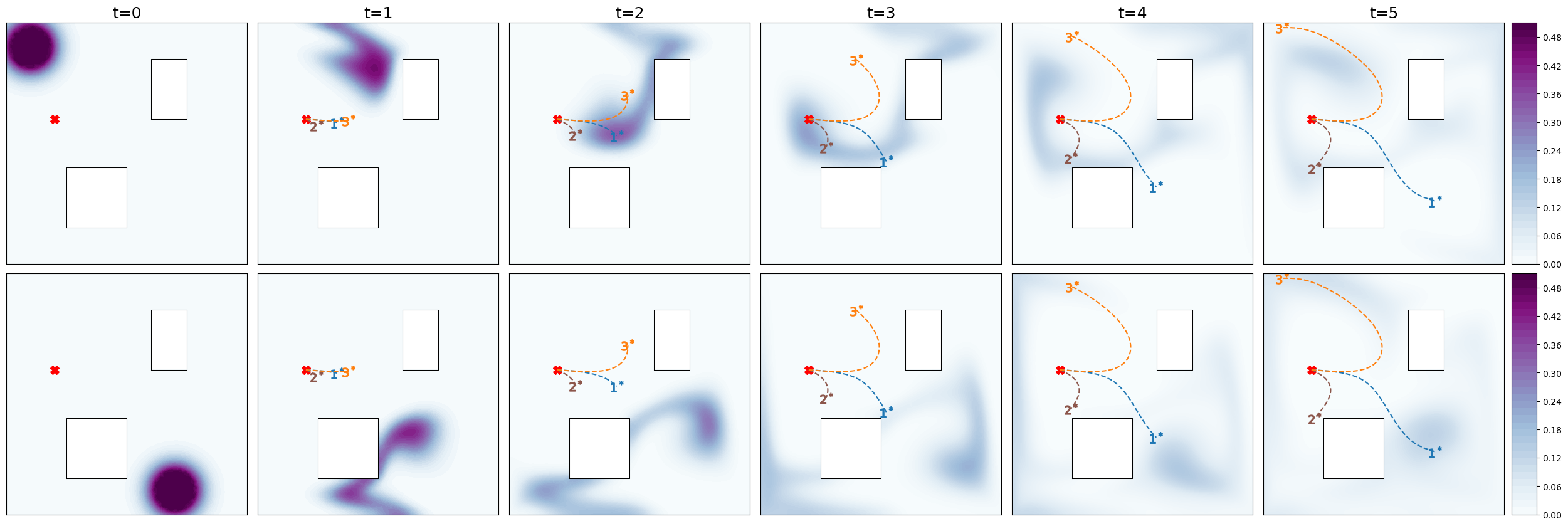}
    \caption{Solution of the \ac{pde} for two initial conditions at time steps $t=0, 1, \dots, 5$.
    The numbered paths are the solutions of the optimization problem for three initial guesses.}
    \label{fig:optimization:paths:1shot:A}
\end{figure}

The optimized paths are shown in Figure \ref{fig:optimization:paths:1shot:A}.
For path $1^*$, the velocity has increased from $\velocity \equiv 0.1$ at the initial guess to $\velocity \equiv 0.131$, and the angular velocity has been adjusted locally for the moving sensor to pass directly through high-concentration areas (see Figure \ref{fig:optimization:paths:1shot:A} for $t=2$ and $t=5$).
The velocity of path $3^*$ has increased even further to $\velocity \equiv 0.178$, allowing the moving sensor to reach and move alongside the upper domain boundary.
For both optimized paths $1^*$ and $3^*$, the changes compared to the initial guesses are driven to increase measurements, especially of pollutant originating from the second, bottom right source.
This is also evident when comparing the top and bottom rows of Figure \ref{fig:measurements-by-parameter}:
The optimized moving sensors measure pollutant originating from either source earlier and at a higher concentration. 
For path $2^*$ where the optimization decreased the regularization term rather than the A-\ac{oed} utility value, the velocity has decreased to $\velocity \equiv 0.052$, making the sensor's path (Figure \ref{fig:optimization:paths:1shot:A}) rather short.
However, there is little change in the measurements of $\u(\mbasis[1])$ (Figure \ref{fig:measurements-by-parameter}, middle column), such that the A- and D-\ac{oed} values are only slightly higher than at the initial guess.

\subsection{Warm start and convergence}
\label{sec:results:transient:warmstart}

\begin{figure}
    \centering
    \includegraphics[width=\linewidth]{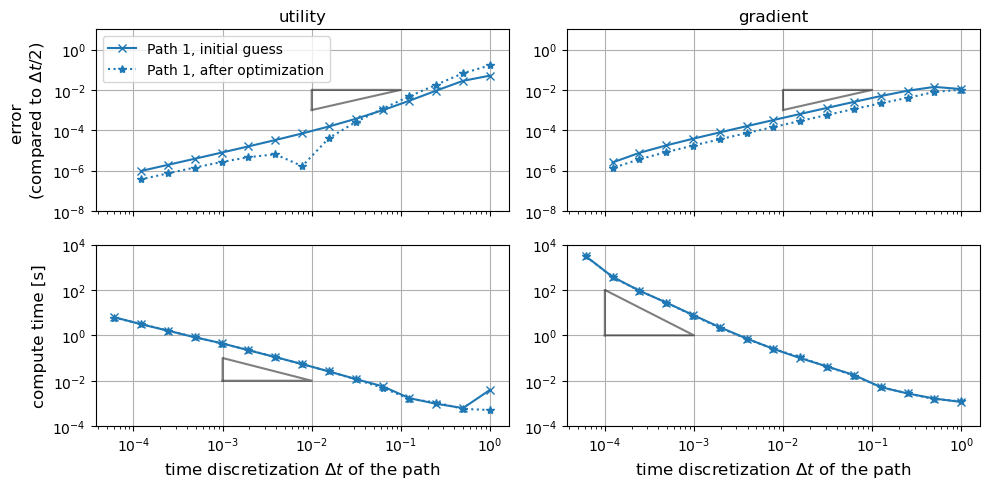}
    \caption{Convergence error (top) and compute time (bottom) of the A-\ac{oed} utility $\ObjAOED$ (left) and its gradient $\nabla \ObjAOED$ at the initial path 1 before and after optimization. Lines in the bottom plots are superimposed.}
    \label{fig:results:convergence}
\end{figure}

Next, we investigate the convergence of the A-\ac{oed} utility and its gradient under refinement of the sensor's time discretization $\dt$ using the control parameterizations of the initial path $1$ and its optimized counterpart $1^*$.
Note that we do not change the time discretization $\Delta s = 10^{-3}$ of the forward finite element model as the Continuous Galerkin time stepping scheme allows us to evaluate $\u(t)$ for any $t \in [0, \finalt]$.
To change the time discretization $\Delta t$ of the paths, we interpolate the paths' velocity and angular velocity control parameters onto the new grid points, and then solve the \ac{ode} anew to obtain the corresponding paths.
Consequently, changing $\dt$ not only effects at which times measurements $\d(t)$ are evaluated in the Bayesian inverse problem and how they are weighted through the inverse noise covariance matrix $\CovNoise^{-1}$, but also at which spatial positions the measurements are taken.
Still, we observe linear convergence as $\dt$ decreases for both $\ObjAOED(\ObsPath)$ and $\nabla \ObjAOED(\ObsPath)$.
The results are shown in Figure \ref{fig:results:convergence} (top row).
We only see one outlier, specifically for path $1^*$ at $\dt =7.81 \cdot 10^{-3}$.
A reason for this outlier could be that path $1^*$ is optimized for $\dt=10^{-2}$, causing the A-\ac{oed} utility value to be particularly small for similar $\dt$, while for much finer or coarser $\dt$ the \ac{oed} converges further away from the minimum.
This behavior indicates that the \ac{oed} optimization problem \eqref{eq:FiniteDimOED} should indeed be solved for the time discretization $\dt$ at which the Baysian posterior \eqref{eq:postMeasure} will be computed after the observational data $\dobs$ is obtained.
However, as shown in Figure \ref{fig:results:convergence} (bottom row), the computation time for the A-\ac{oed} utility value increases linearly as $\dt$ decreases, and increases almost quadratically for evaluating the corresponding gradient.
This makes the optimization problem increasingly expensive to solve:
Assuming the optimization would need the same number of function and gradient evaluations as for $\dt=10^{-2}$ (770 and 287, respectively), then the compute time would increase from 49.8 s for $\dt=10^{-2}$ to above 13 min for $\dt = 10^{-3}$ and above 29 h for $\dt = 10^{-4}$.
Instead, for small $\dt$, we warm-start the optimization with the optimal paths obtained for coarser $\dt$.

To demonstrate warm-starting, we interpolate the control parameters and the Lagrangian multipliers of the optimized path $1^*$ for $\dt = 10^{-2}$ from Section \ref{sec:results:transient:oneshot} onto a finer time discretization with halved time step size $\dt = \frac12 \cdot 10^{-2}$; we use them as initial guesses to solve refined optimization problem.
We repeat this process six times until $\dt = (\frac{1}{2})^{6} \cdot 10^{-2} \approx 1.5 \cdot 10^{-4}$; the results are reported in Table \ref{tab:results:warm}:
In each iteration of halving $\dt$, the number of iterations within the optimization algorithm and consequently the number of function and gradient evaluations decreases.
Finally, when warm-starting for $\dt = (\frac{1}{2})^{6} \cdot 10^{-2}$, the optimum is already reached within one iteration of the optimization algorithm, requiring only two function and gradient evaluations.
In combination with the decreasing initial \ac{nlp} error for each warm-started optimization, this behavior indicates that we are indeed converging to a local minimum of $\ObjAOED(\CovPost)$ for continuous time. 
Moreover, it shows that the discretization of the Baysian inverse problem described in Section \ref{sec:discretization} remains indeed consistent for $\dt \rightarrow 0$, i.e., evaluating observational data $\dobs$ at arbitrary many time steps does not result in an arbitrarily large reduction in uncertainty (as can be the case for uncorrelated measurements) but converges.

\afterpage{%
    \begin{landscape}
        \begin{table}[]
    \centering
    \begin{tabular}{l|rr|rrrrrr}
        iteration & grid & 0 & 1 & 2 & 3 & 4 & 5 & 6 \\
        $100\dt$ & $1$ & $1$ & $2^{-1}$ & $2^{-2}$ & $2^{-3}$ & $2^{-4}$ & $2^{-5}$ & $2^{-6}$ \\
        \hline
        initial \acs{nlp} error & 
        --- & --- & 4.69e-3 & 2.33e-3 & 1.17e-3 & 5.82e-4 & 2.90e-4 & 1.45e-4
        \\
        final \acs{nlp} error & 
        --- & 2.07e-9 & 8.47e-7 & 4.17e-9 & 4.69e-9 & 6.87e-9 & 6.34e-9 & 5.31e-9
        \\
        termination & 
        --- & opt. & acc. & opt. & opt. & opt. & opt. & opt.
        \\
        \hline 
        no. iterations & 
        --- & 286 & 142 & 76 & 44 & 12 & 3 & 1
        \\
        no. fct. evals & 
        22650 & 770 & 540 & 244 & 214 & 14 & 4 & 2
        \\
        no. grad. evals & 
        --- & 287 & 144 & 77 & 45 & 13 & 4 & 2
        \\
        \hline
        comp. time [min] & 
        5.97 & 0.83 & 1.17 & 2.29 & 4.61 & 4.03 & 4.56 & 10.80
        \\
        cuml. time [min] & 
        5.97 & 6.80 & 7.97 & 10.26 & 14.87 & 18.90 & 23.46 & 34.26
        \\
        \hline
        cost function val.&
        0.853980 & 0.611383 & 0.611261 & 0.611203 & 0.611175 & 0.611161 & 0.611154 & 0.611151
        \\
        change & --- & -2.43e-1 & -1.22e-4 & -5.77e-5 & 
        -2.80e-5 & -1.38e-5 & -6.85e-6 & 
        -3.40e-6 \\
        \hline 
        A-\ac{oed} &
        0.790802 & 0.493247 & 0.493238 & 0.493196 & 0.493187 & 0.493179 & 0.493183 & 0.493183
        \\
        change & --- & -2.98e-1 & -9.74e-6 & -4.17e-5 & 
        -8.76e-6 & -7.80e-6 & 3.27e-6 & 
         3.81e-9 \\
         \hline 
        D-\ac{oed} &
        0.072840 & 0.038384 & 0.038384 & 0.038376 & 0.038375 & 0.038375  & 0.038375 & 0.038378
        \\
        change & --- & -3.45e-2 & -3.39e-7 & -7.62e-6 & 
        -1.24e-6 & -5.47e-7 & 5.25e-7 & 
         2.94e-6
    \end{tabular}
    \caption{Warmstart optimization results (cost function \eqref{eq:FiniteDimOED} with $\ObjOED=\ObjAOED$ and $\gamma=0.1$) with decreasing time discretization $\dt$.
    The second column is the grid search for path $1$ (Section \ref{sec:results:transient:2D}), the third column is the optimization for path $1^*$ (Section \ref{sec:results:transient:oneshot})}
    \label{tab:results:warm}
\end{table}
    \end{landscape}
    \clearpage
}

Up until now, we have reported results for pointwise measurements, as these are comparatively cheap to compute and therefore favorable to reduce the compute time of the optimization algorithm.
However, an identified optimal path for pointwise measurements may not be optimal for other measurement types of the general form \eqref{eq:data:general}.
Yet, it is an educated initial guess if the kernel function $\MeasKern$ weights the position of the moving sensor strongly.
We illustrate this for measurements of the form \eqref{eq:data:uniform}, i.e., measurements $\d(t;\ObsPath, \u) := \mathbb{E}_{\mu}[\u(t, \cdot)]$ where $\mu = \mu(t, \ObsPath) = \mathcal{U}(B_r(\ObsPath(t)))$ is the uniform distribution on a ball with radius $r$ centered around the position of the moving sensor.
For any time $t \in [0, 5]$, we compute these measurements by interpolating $\u(t, \cdot)$ onto a fine mesh (568 nodes) of $B_r(\ObsPath(t))$ before evaluating the integral.
We estimate that this procedure introduces an approximation error of approximately $~10^{-6}$ to the A-\ac{oed} utility function, and increases its compute time for a given path $\ObsPath$ to approximately 25 s per evaluation. 
Figure \ref{fig:convolution:A} shows the A-\ac{oed} utility of path $1^*$ for radii $r \in (0.001, 0.0199)$.
We avoid radii $r \ge 0.02$ as $B_r(\ObsPath(t))$ may then extend past the safety distance and intersect $\partial \Domain$.

\begin{figure}
    \centering
    \includegraphics[width=0.8\linewidth]{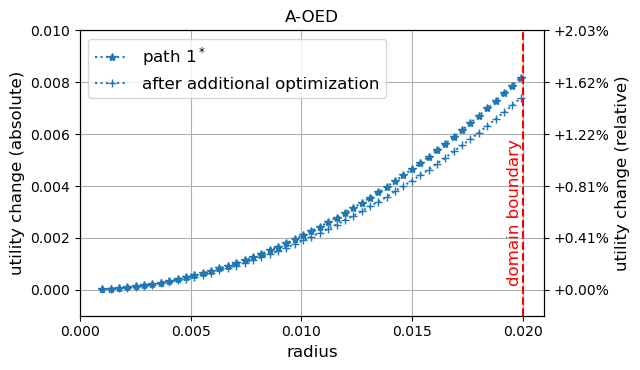}
    \caption{Changes in the A-\ac{oed} utility value when using averaged measurements of the form \eqref{eq:data:uniform} instead of pointwise measurements for different radii.}
    \label{fig:convolution:A}
\end{figure}

We observe in Figure \ref{fig:convolution:A} that $\ObjAOED(\ObsPath)$ increases as $B_r(\ObsPath(t))$ becomes larger.
This is to be expected as the integral in \eqref{eq:data:uniform} averages local information.
However, the changes in utility compared to pointwise measurements are slim:
For $r=0.0199$, $\ObjAOED(\ObsPath) = 0.501$, which is a factor $\times 1.0162$ increase compared to pointwise measurements.
Moreover, path $1^*$ is almost optimal for the considered radii:
Warmstarting initially with path $1^*$ and then with the obtained minima, the optimization algorithm successfully converged for all tested radii (32 times to a local optimum, 18 times to an acceptable level), terminating within 32 iterations for most\footnote{In four cases the optimization took longer: 256 iterations for $r=0.0097$, 196 iterations for $r=0.01423$, 149 iterations for $r=0.0146$, and 219 iterations for $r=0.0157$.}, with fewer iterations primarily for the smaller radii.
Yet, this additional optimization changes $\ObjAOED(\ObsPath)$ only slightly, with the largest changes observed for the largest radii, confirming the intuition that pointwise measurements can be used to warmstart the optimization.

\subsection{High-dimensional parameter space}\label{sec:results:transient:highdim}

\begin{figure}
    \centering
    \includegraphics[width=\linewidth]{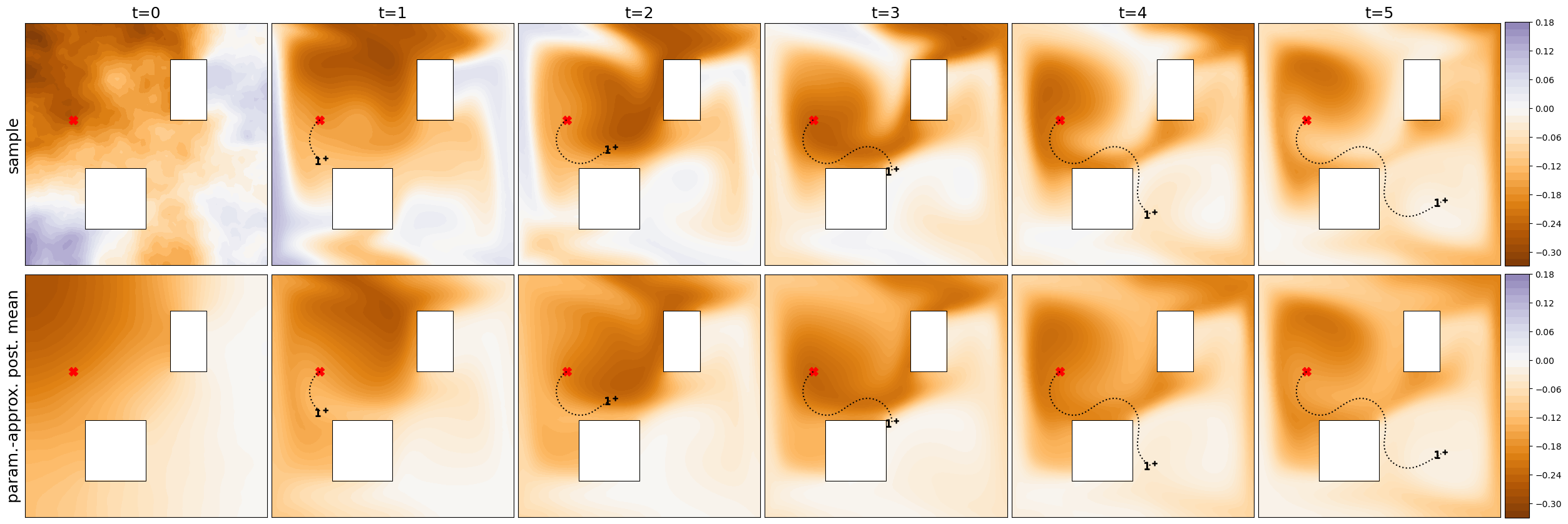}
    \caption{Top: Sample of the prior ($t=0$) and its evolution in time. Bottom: Parameter-reduced posterior mean ($t=0$) and its evolution in time.}
    \label{fig:highdim:sample}
\end{figure}

Finally, after establishing the convergence behavior of our \ac{oed} formulation and the optimization algorithm on the identification of $\dimParameter = 2$ scaling values, we consider the Bayesian inverse problem of inferring the initial condition $u_0 \in L^2(\Omega)$ as a \textit{field}.
Specifically, we model $u_0 \sim \mathcal{N}(0, \mathcal{C}_{\rm{pr}})$ with $\mathcal{C}_{\rm{pr}}$ chosen as the squared (weak) solution operator of the \ac{pde}
\begin{align}
    -\Delta u + 8 u &= 0 \text{ in $\Domain$, } &\nabla u \cdot \mathbf{n} &= 0 \text{ on $\partial \Domain$}.
\end{align}
This definition of the prior $\prior$ is discussed extensively in \cite{BuiThanh2013} and guarantees that $\mathcal{C}_{\rm{pr}}$ is indeed trace class. 
A sample of the prior and its corresponding state evolved through the \ac{pde} \eqref{eq:transient} are shown in Figure \ref{fig:highdim:sample} (top row).

\begin{figure}
    \centering
    \includegraphics[width=\textwidth]{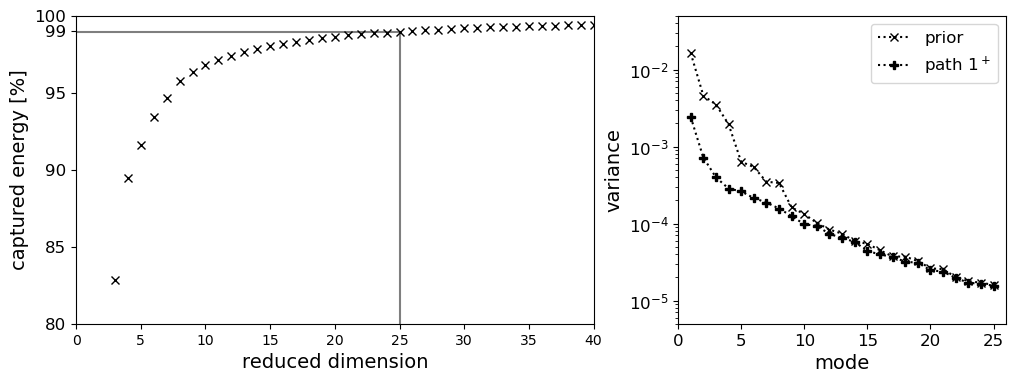}
    \caption{Left: Percentage of captured energy (in percent) for varying dimensions of the reduced space.
    Right: Eigenvalues of the parameter-reduced prior and posterior covariance matrices.
    }
    \label{fig:highdim:energy}
\end{figure}

Because we have restricted the exposition in our manuscript to parameters $\m \in \parameterspace$ with $\dimParameter \in \mathbb{N}$ and the efficient, scalable extension to parameter fields is part of future work, we adopt the parameter-approximated posterior from \cite{cui2016scalable}.
To this end, we decompose $u_0 = u_0^{\dimParameter} + u_0^{\perp}$ where $u_0^{\dimParameter} \in \mathcal{V}$ is the orthogonal projection of $u_0$ onto an $\dimParameter$-dimensional subspace $\mathcal{V} \subset L^2(\Omega)$ and $u_0^{\perp} \in \mathcal{V}^{\perp}$ is its orthogonal complement.
We choose $\mathcal{V}$ through a principal component analysis with 1,000 samples of the prior; for $\dimParameter = \dim \mathcal{V} = 25$, our reduced space captures 99.01 \% of the sample energy (see Figure \ref{fig:highdim:energy}, left).
Noting that $u_0^{\dimParameter} \in \mathcal{V}$ can be uniquely identified with a vector $\m \in \mathbb{R}^{\dimParameter}$ such that $u_0^{\dimParameter} = \sum_{i=1}^{\dimParameter} \m_i v_i$ where $v_1, \dots, v_{\dimParameter}$ are the basis functions of $\mathcal{V}$, the parameter-approximated posterior is defined as
\begin{align}
\widetilde{\pi}(u_0 | \dobs) \propto \widehat{\pi}(\m | \dobs) \pi_0(u_0^{\perp}) \propto \exp \left( -\| \OpCovNoise^{-1/2}(\paratoobs \m - \dobs) \|_{\Ltwotime}^2\right) \pi_0(u_0)
\end{align}
with the parameter-reduced posterior defined as
\begin{align}
    \widehat{\pi}(\m | \dobs) \propto \exp \left( -\| \OpCovNoise^{-1/2}(\paratoobs \m - \dobs) \|_{L^2([0, 5])}^2 - \|\mathcal{C}_{\rm{pr}}^{-1/2} \sum_{i=1}^{\dimParameter} \m_i v_i \|^2_{L^2(\Domain)} \right).
\end{align}
We refer to \cite{cui2016scalable} for a detailed introduction to this approximation.

\begin{table}[t]
    \centering
    \begin{tabular}{l|rr|rr}
         & initial guess & optimized path & change & change \\
         & path $1^*$ & path $1^+$ & (absolute) & (relative) \\
         \hline
        cost & 6.477e-03 & 5.879e-03 & -5.979e-04 & $\times$ 0.908 \\
        A-\ac{oed} & 6.359e-03 & 5.461e-03 & -8.981e-04 & $\times$ 0.859 
    \end{tabular}
    \caption{
    Cost function evaluation and A-\ac{oed} utility cost of the parameter-reduced posterior covariance matrix before and after path optimization.
    }
    \label{tab:highdim:optimization}
\end{table}

To account for the different scaling of the initial condition with the new high-dimensional prior compared to the $\dimParameter = 2$ examples in the previous section, we adjust the scaling of \eqref{ex:noise:pde:strong} in our noise model by factor 10 (i.e., the diffusion coefficient in \eqref{ex:noise:pde:strong} becomes 10, and the advection coefficient becomes 1000).
Furthermore, we adjust the scaling of the regularization to account for the reduction in prior uncertainty (the $\dimParameter=2$ example had $\ObjAOED(\CovPr) = 2$, the parameter-reduced prior has $\ObjAOED(\CovPr) = 0.029$).
We use path $1^*$ as initial guess for our optimization, and denote the obtained optimized path as $1^+$.
An overview on the optimization process is provided in Table \ref{tab:highdim:optimization}.
The optimization terminated successfully within 184 s (72 iterations, 81 objective function evaluations, 73 gradient evaluations, constraint violation of $4.735\cdot 10^{-10}$, NLP error of $3.029\cdot 10^{-9}$).
The A-\ac{oed} criterion was reduced from $6.359\cdot 10^{-3}$ for path $1^*$ to $5.461\cdot 10^{-3}$ for path $1^+$, a reduction of factor 0.859.
The smaller reduction in uncertainty compared to the previous experiments is likely because path $1^*$ was already optimized to distinguish information from the two main vortices in the domain (top-left and bottom-right).

The variance of each mode of the parameter-reduced prior and the parameter-reduced posterior for path $1^+$ is shown in Figure \ref{fig:highdim:energy} (right).
We see that primarily the uncertainty in the first, most dominant modes was reduced, while the variance in the later modes has changed little.
Yet, as expected, all eigenvalues of the parameter-reduced posterior covariance matrix $\CovPost$ are at least as small as those of the parameter-reduced prior.

To exemplify the influence of the path $\ObsPath$ on the solution of the Bayesian inverse problem, we take a sample of the prior (shown in Figure \ref{fig:highdim:sample}, top row), take measurements along path $1^+$, perturb them with noise drawn from our adjusted noise model, and compute the parameter-approximated posterior mean.
Its evolution in time is shown in Figure \ref{fig:highdim:sample} (second row).
Note that a perfect reconstruction is not expected; first, because the prior sample is in the full finite element space (dimension 32,096) while the parameter-approximated posterior mean is restricted to the reduced space $\mathcal{V}$ (dimension 25); and, second, because the posterior is by definition a compromise between the prior and the observational data.
Still, we observe a good agreement between the sample and the dynamics of the parameter-approximated posterior mean, especially when enough time has progressed for positive and negative pollutant sources to mix.

\section{Conclusion}\label{sec:conclusion}

In this work we introduced a linear Bayesian inverse problem for the identification of a parameter of interest from indirect, continuous-time measurements obtained by a moving sensor.
We posed the \ac{oed} problem of identifying the sensor path for which the posterior distribution is the most informative as measured by an \ac{oed} utility function.
The global optimization of the path as a whole allows the sensor to exploit transient dynamics in the state:
It can prioritize measurements for reducing the uncertainty in one parameter while waiting for information from another to come closer;
it can also accept temporarily taking uninformative measurements for the purpose of repositioning. 
We showed how to coherently discretize and constrain the optimization problem such that it converges in the time-continuous limit, derived formulas for the gradient of the cost function, and proposed how to consequently solve the optimization problem using an interior point method.
We demonstrated our proposed methodology on a convection-diffusion model of a pollutant spreading in an urban environment.

In this work we have limited our exposition to a single moving sensor but the extension to several sensors is straightforward.
With this extension, our methodology can moreover be applied for positioning stationary sensors---i.e., the sensor selection setting but with a non-discrete set of possible positions---that take continuous-time measurements.
In future work, we will increase the applicability of our approach through more general versions of the \ac{oed} utility functions (e.g., the expected information gain) to allow nonlinear parameter-to-observable maps.
We will expand upon our initial results for the parameter-reduced posterior in Section \ref{sec:results:transient:highdim} to target infinite-dimensional parameter spaces and further reduce computational costs through strategic model reduction and multifidelity techniques.
Moreover, we intend to increase the robustness of our \ac{oed} optimization problem to aleatoric uncertainties and investigate cheaper approaches for obtaining initial guesses.

\section{Acknowledgments}
This work was supported by the Applied Mathematics activity within the U.S. Department of Energy, Office of Science, Advanced Scientific Computing Research, under Contract DE-AC02-06CH11357.
NA and KW were supported in parts by the Department of Energy grant DE-SC002317, and the National Science Foundation grant \#2103942.

\section*{Code and data availability}
The source code and data used to generate the numerical results reported in Section \ref{sec:results:forward} are available at \href{https://github.com/leyffer/m2dtIceSheet}{https://github.com/leyffer/m2dtIceSheet}.

\appendix

\section{Proofs}\label{sec:proofs}

\begin{corollary}\label{thm:dinL2}
Let $u \in L^2([0, \finalt], \mathcal{V})$ with $\mathcal{V} \subset L^2(\Omega)$, $\ObsPath \in C([0, \finalt], \Omega)$, and let $\d$ be defined as in \eqref{eq:data:general}.
If \eqref{KernelAssumptions} holds, then $\d \in L^2([0, \finalt])$.
If additionally $u \in C([0, \finalt], \mathcal{V})$, then $\d \in C([0, \finalt])$.
\end{corollary}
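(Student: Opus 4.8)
The plan is to reduce both assertions to a single pointwise Cauchy--Schwarz estimate in $L^2(\Domain)$ followed by integration in time. Throughout I would write $d(t; u, \ObsPath) = \langle u(t, \cdot),\, \MeasKern(\cdot, \ObsPath(t)) \rangle_{L^2(\Domain)}$, abbreviate $C_{\MeasKern} := \esssup_{\mathbf{y} \in \Domain} \|\MeasKern(\cdot, \mathbf{y})\|_{L^2(\Domain)}$, which is finite by \eqref{KernelAssumptions}, and let $c_{\mathcal{V}}$ denote the norm of the continuous embedding $\mathcal{V} \hookrightarrow L^2(\Domain)$ (a Banach space in the standing setup of the paper).

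For the $L^2$-claim I would first record the pointwise bound $|d(t; u, \ObsPath)| \le \|u(t, \cdot)\|_{L^2(\Domain)}\, \|\MeasKern(\cdot, \ObsPath(t))\|_{L^2(\Domain)} \le c_{\mathcal{V}} C_{\MeasKern}\, \|u(t, \cdot)\|_{\mathcal{V}}$, then square and integrate in $t$ to obtain $\|d(\cdot; u, \ObsPath)\|_{L^2([0,\finalt])} \le c_{\mathcal{V}} C_{\MeasKern}\, \|u\|_{L^2([0,\finalt], \mathcal{V})} < \infty$. Here I would flag the one subtlety that the essential bound on $\|\MeasKern(\cdot, \mathbf{y})\|_{L^2(\Domain)}$ is being evaluated at the single position $\mathbf{y} = \ObsPath(t)$; this is harmless for the concrete kernels of Examples~\ref{ex:gaussian}--\ref{ex:uniform}, where $\|\MeasKern(\cdot, \mathbf{y})\|_{L^2(\Domain)}$ is finite for \emph{every} $\mathbf{y}$. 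To finish, I would check that $t \mapsto d(t; u, \ObsPath)$ is measurable: strong (Bochner) measurability of $u$ lets me approximate $u$ a.e.\ by $\mathcal{V}$-valued simple functions, which reduces the question to measurability of $t \mapsto \langle v,\, \MeasKern(\cdot, \ObsPath(t))\rangle_{L^2(\Domain)}$ for fixed $v \in \mathcal{V}$, and this holds since $\ObsPath$ is continuous and $\MeasKern$ is (jointly) measurable. Combined with the bound, $d \in L^2([0,\finalt])$.

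For the continuity claim, assuming now $u \in C([0,\finalt], \mathcal{V})$, I would fix $t \in [0,\finalt]$ and split, for $t' \to t$, $|d(t) - d(t')| \le |\langle u(t) - u(t'),\, \MeasKern(\cdot, \ObsPath(t))\rangle_{L^2(\Domain)}| + |\langle u(t'),\, \MeasKern(\cdot, \ObsPath(t)) - \MeasKern(\cdot, \ObsPath(t'))\rangle_{L^2(\Domain)}|$. The first term I would bound by $C_{\MeasKern} c_{\mathcal{V}}\, \|u(t) - u(t')\|_{\mathcal{V}}$, which tends to $0$ by continuity of $u$ into $\mathcal{V}$. For the second term, $s \mapsto \|u(s)\|_{\mathcal{V}}$ is continuous on the compact interval $[0,\finalt]$, hence $\sup_{s \in [0,\finalt]} \|u(s, \cdot)\|_{L^2(\Domain)} < \infty$, so it suffices that $\|\MeasKern(\cdot, \ObsPath(t)) - \MeasKern(\cdot, \ObsPath(t'))\|_{L^2(\Domain)} \to 0$.

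The main obstacle is precisely this last limit: it requires continuity of the map $\mathbf{y} \mapsto \MeasKern(\cdot, \mathbf{y}) \in L^2(\Domain)$ along the curve $\ObsPath$, which is genuinely stronger than \eqref{KernelAssumptions} alone. I would resolve it by noting that this $L^2(\Domain)$-continuity holds for every measurement kernel considered in the paper and is, in particular, implied by the Lipschitz condition $\|\MeasKern(\cdot, \mathbf{y}) - \MeasKern(\cdot, \mathbf{y}')\|_{L^2(\Domain)} \le \overline{\gamma}\, \|\mathbf{y} - \mathbf{y}'\|$ used in Proposition~\ref{thm:continuity}: composed with continuity of $\ObsPath$ it gives $\|\MeasKern(\cdot, \ObsPath(t)) - \MeasKern(\cdot, \ObsPath(t'))\|_{L^2(\Domain)} \le \overline{\gamma}\, \|\ObsPath(t) - \ObsPath(t')\| \to 0$, whence $d \in C([0,\finalt])$. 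Apart from this continuity input and the routine measurability bookkeeping, the entire argument is a bounded-linear-functional estimate, so I anticipate no further difficulty.
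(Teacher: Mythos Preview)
Your argument for the $L^2$-bound is exactly the paper's: Cauchy--Schwarz in $L^2(\Domain)$ followed by the uniform kernel bound and integration in $t$. You additionally discuss measurability and the ``esssup vs.\ pointwise'' subtlety, which the paper does not; these are reasonable side remarks but not part of the paper's proof.

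For the continuity claim, your route differs. The paper dispatches the second assertion in one line, ``follows from the linearity of $\d$ in $\u$'', without unpacking what regularity of $\MeasKern(\cdot,\mathbf{y})$ in $\mathbf{y}$ is being used. You instead split $\d(t)-\d(t')$ into a term controlled by $\|u(t)-u(t')\|_{\mathcal V}$ and a term controlled by $\|\MeasKern(\cdot,\ObsPath(t))-\MeasKern(\cdot,\ObsPath(t'))\|_{L^2(\Domain)}$, then explicitly flag that the latter requires $L^2(\Domain)$-continuity of $\mathbf{y}\mapsto\MeasKern(\cdot,\mathbf{y})$, which you import from the Lipschitz hypothesis of Proposition~\ref{thm:continuity}. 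Your version is more careful and makes visible a hypothesis the paper's one-liner leaves implicit; the paper's version is shorter but, taken literally, does not explain why continuity in $t$ follows. Both lead to the same conclusion, so your proposal is correct and slightly more rigorous on this point.
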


\begin{proof}
The first part of the claim follows from the Cauchy inequality:
\begin{align*}
    \|\d\|_{L^2([0,\finalt])}^2 
    &= \int_0^{\finalt} \left( \int_\Domain \u(t, \mathbf{x}) \MeasKern(\mathbf{x}, \ObsPath(t)) d\mathbf{x} \right)^2 dt \\
    &\le \int_0^{\finalt} \|\u(t, \cdot)\|_{L^2(\Omega)}^2 \|\MeasKern(\cdot, \ObsPath(t))\|_{L^2(\Omega)}^2 dt \\
    &\le c_1 \left( \esssup_{\mathbf{y} \in \Omega} \|\MeasKern(\cdot, \mathbf{y})\|_{L^2(\Omega)} \right) \int_0^{\finalt} \|\u(t, \cdot)\|_{\mathcal{V}}^2 dt \\
    &\le c_2 \|\u\|_{L^2([0, T];\mathcal{V})}^2
\end{align*}
with constants $c_1, c_2 > 0$.
The second part follows from the linearity of $\d$ in $\u$.
\end{proof}

\begin{proof}[Proof of Proposition \ref{thm:continuity}]
    Let the time step index $1 \le k \le \dimTime$ be arbitrary.
    Let $\varepsilon > 0$ be arbitrary.
    Because $\pk \in \Domain$ and $\Domain$ is open, there exists $r > 0$ such that $B_r(\pk) \subset \Domain$ for all $1 \le k \le \dimTime$.
    Let $\mathbf{q} \in B_r(\pk)$ be arbitrary with $\|\pk-\mathbf{q}\|_2 < \varepsilon$.
    Define $\widetilde{\ObsPath} := \{\pk[1], \dots, \pk[k-1], \mathbf{q}, \pk[k+1], \dots, \pk[\dimTime]\}$.
    By definition \eqref{eq:paratoobs:discrete}, we have for $1 \le \ell \le \dimTime$, $\ell \neq k$, that $[\paratoobsDiscrete - \mathbf{G}(\widetilde{\ObsPath})]_{\ell,m} = 0$ for all $m \in \{1, \dots, \dimParameter\}$.
    For $\ell = k$ and any $1 \le m \le \dimParameter$ we have that
    \begin{align*}
        \left| [\paratoobsDiscrete - \mathbf{G}(\widetilde{\ObsPath})]_{k,m} \right|
        &= \left| \int_{\Domain} \pdesolutionoperatorDiscrete(\mbasis)(\tk, \mathbf{x}) \left( \MeasKern(\mathbf{x}, \pk) - \MeasKern(\mathbf{x}, \mathbf{q})\right) d\mathbf{x} \right| \\
        &\le \int_{\Domain} |\pdesolutionoperatorDiscrete(\mbasis)(\tk, \mathbf{x})| \left| \MeasKern(\mathbf{x}, \pk) - \MeasKern(\mathbf{x}, \mathbf{q})\right| d\mathbf{x} \\
        &\le \int_{\Domain} |\pdesolutionoperatorDiscrete(\mbasis)(\tk, \mathbf{x})| \gamma_{\MeasKern}(\mathbf{x}) \|\pk-\mathbf{q}\|_2 d\mathbf{x} \\
        &\le \overline{\gamma} \|\pk-\mathbf{q}\|_2 \| \pdesolutionoperatorDiscrete(\mbasis)(\tk, \,\cdot\,) \|_{L^1(\Domain)}.
    \end{align*}
    Since $\Domain$ is bounded, the Cauchy inequality yields that
    \begin{align*}
        \| \pdesolutionoperatorDiscrete(\mbasis)(\tk, \,\cdot\,) \|_{L^1(\Domain)} \le \sqrt{|\Domain|} \, \| \pdesolutionoperatorDiscrete(\mbasis)(\tk, \,\cdot\,) \|_{L^2(\Domain)}
    \end{align*}
    is bounded.
    In particular, this bound is independent of $\pk$ and $\mathbf{q}$.
    Thus, $\left| [\paratoobsDiscrete - \mathbf{G}(\widetilde{\ObsPath})]_{k,m} \right| \in \mathcal{O}(\varepsilon)$.
    The claim follows because $k$ and $m$ were arbitrary and because $\dimParameter$ and $\dimTime$ are finite.
\end{proof}

\begin{proof}[Proof of Proposition \ref{thm:gradient:1}]
 Let the time step index $1 \le k \le \dimTime$ and the spatial direction $i \in \{1,2,3\}$ (or $i \in \{1,2\}$ if $\Domain \subset \mathbb{R}^2$) be arbitrary.
 Let $\ObjOED \in \{\ObjAOED, \ObjDOED\}$.
In the following, we adopt the numerator (Jacobian) layout convention for matrix derivatives.

Using a standard scalar-by-matrix result from matrix calculus (\cite{abadir2005matrix},  p.354), we observe that
\begin{align}\label{eq:derivative:proof:1}
    \frac{d}{d \pki} \ObjOED(\CovPostDiscrete) = \operatorname{trace}\left( \left. \frac{d \ObjOED(X)}{d X}\right|_{X = \CovPostDiscrete} \frac{d}{d \pki} \CovPostDiscrete \right),
\end{align}
where $\frac{d \ObjAOED(X)}{d X} = \Identity[\dimParameter]$ for the A-\ac{oed} utility function $\ObjAOED(X) = \operatorname{trace}(X)$ (\cite{abadir2005matrix},  p.355), and $\frac{d \ObjDOED(X)}{d X} = \operatorname{det}(X) X^{-1}$ for the D-\ac{oed} utility function $\ObjDOED(X) = \operatorname{det}(X)$ (\cite{abadir2005matrix},  p.369).
Since $\CovPostDiscrete$ is symmetric postitive definite, we can expand further
\begin{align}\label{eq:derivative:proof:2}
    \frac{d}{d \pki} \CovPostDiscrete = - \CovPostDiscrete \left(\frac{d}{d \pki} \CovPostDiscrete^{-1} \right) \CovPostDiscrete
\end{align}
using the scalar-by-scalar identity $\frac{d}{d x}X(x)^{-1} = X(x)^{-1} \frac{d X}{d x} X(x)^{-1}$ for invertible matrices $X$ that depend on a scalar $x\in \mathbb{R}$ (\cite{abadir2005matrix},  p.364).
Inserted into \eqref{eq:derivative:proof:1}, \eqref{eq:derivative:proof:2} gives us
\begin{align}\label{eq:derivative:proof:A}
        \frac{d}{d \pki} \ObjAOED(\CovPostDiscrete) &= -\operatorname{trace}\left( \CovPostDiscrete \left(\frac{d}{d \pki} \CovPostDiscrete^{-1} \right) \CovPostDiscrete \right),
    \end{align}
for A-\ac{oed}, and 
\begin{align}\label{eq:derivative:proof:D}
    \frac{d}{d \pki} \ObjDOED(\CovPostDiscrete) &= -\ObjDOED(\CovPostDiscrete) \operatorname{trace}\left(\left(\frac{d}{d \pki} \CovPostDiscrete^{-1} \right) \CovPostDiscrete \right).
\end{align}
for D-\ac{oed}.

Because the path remains uninterrupted within $\Domain$, i.e. $\pk[\ell] \in \Omega$ for all $\ell \in \{1,\dots, \dimTime\}$, we can next exploit the structure of $\CovPostDiscrete$ in \eqref{eq:CovPost:discrete} to write
\begin{equation}\label{eq:derivative:proof:3}
    \begin{aligned}
    \frac{d}{d \pki} \CovPostDiscrete^{-1} &= \frac{d}{d \pki}\left(\paratoobsDiscrete^{\top}\CovNoise^{-1} \paratoobsDiscrete + \iCovPr\right) \\
    &= \left(\frac{d \paratoobsDiscrete}{d \pki}\right)^{\top} \CovNoise^{-1} \paratoobsDiscrete + \paratoobsDiscrete^{\top}\CovNoise^{-1}\frac{d \paratoobsDiscrete}{d \pki},
    \end{aligned}
\end{equation}
where we have used that the prior covariance matrix $\CovPr$ and and the inverse noise covariance matrix $\CovNoise^{-1}$ are independent of $\ObsPath$.
After inserting \eqref{eq:derivative:proof:3} back into \eqref{eq:derivative:proof:A} and \eqref{eq:derivative:proof:D}, we can simplify further by exploiting that $\frac{d}{d \pki} \CovPostDiscrete^{-1}$ only appears within the trace operator:
For the A-\ac{oed} utility cost function, we obtain \eqref{eq:derivative:A:discrete} through
\begin{align*}
    &\frac{d}{d \pki} \ObjAOED(\CovPostDiscrete) \\ 
    &= -\operatorname{trace}\left( \CovPostDiscrete \left(\frac{d}{d \pki} \CovPostDiscrete^{-1} \right) \CovPostDiscrete \right) \\
    &= -\operatorname{trace}\left( \CovPostDiscrete \left(\left(\frac{d \paratoobsDiscrete}{d \pki}\right)^{\top} \CovNoise^{-1} \paratoobsDiscrete + \paratoobsDiscrete^{\top}\CovNoise^{-1}\frac{d \paratoobsDiscrete}{d \pki} \right) \CovPostDiscrete \right) \\
    &= -2 \operatorname{trace} \left( \CovPostDiscrete  \paratoobsDiscrete^{\top}\CovNoise^{-1}\frac{d \paratoobsDiscrete}{d \pki} \CovPostDiscrete \right);
\end{align*}
for the D-\ac{oed} utility cost function, we get \eqref{eq:derivative:D:discrete} analogously through
\begin{align*}
    &\frac{d}{d \pki} \ObjDOED(\CovPostDiscrete) \\
    &= -\ObjDOED(\CovPostDiscrete) \operatorname{trace}\left(\left(\frac{d}{d \pki} \CovPostDiscrete^{-1} \right) \CovPostDiscrete \right) \\
    &= -\ObjDOED(\CovPostDiscrete) \operatorname{trace}\left(\left(\left(\frac{d \paratoobsDiscrete}{d \pki}\right)^{\top} \CovNoise^{-1} \paratoobsDiscrete + \paratoobsDiscrete^{\top}\CovNoise^{-1}\frac{d \paratoobsDiscrete}{d \pki} \right) \CovPostDiscrete \right) \\
    &= -\ObjDOED(\CovPostDiscrete) \operatorname{trace}\left( \paratoobsDiscrete^{\top}\CovNoise^{-1}\frac{d \paratoobsDiscrete}{d \pki} \CovPostDiscrete \right) \\
    &\qquad - \ObjDOED(\CovPostDiscrete) \operatorname{trace}\left(\left(\left(\frac{d \paratoobsDiscrete}{d \pki}\right)^{\top} \CovNoise^{-1} \paratoobsDiscrete \right) \CovPostDiscrete \right) \\
    &= -2\ObjDOED(\CovPostDiscrete) \operatorname{trace}\left( \paratoobsDiscrete^{\top}\CovNoise^{-1}\frac{d \paratoobsDiscrete}{d \pki} \CovPostDiscrete \right),
\end{align*}
where we have used in the last step that both $\CovPostDiscrete$ and $\paratoobsDiscrete^{\top}\CovNoise^{-1}\frac{d \paratoobsDiscrete}{d \pki}$ are square matrices and $\CovPostDiscrete$ is symmetric.

Continuity of $\nabla_{\pk} \ObjAOED(\CovPostDiscrete)$ and $\nabla_{\pk} \ObjDOED(\CovPostDiscrete)$ in $\pk$ follows from the representations \eqref{eq:derivative:A:discrete} and \eqref{eq:derivative:D:discrete}, as these are compositions of continuous operators.    
\end{proof}

\begin{proof}[Proof of Corollary \ref{thm:gradient:2}]
The matrix-by-scalar derivative $\frac{d \paratoobsDiscrete}{d \pki} \in \mathbb{R}^{\dimTime \times \dimParameter}$ can be evaluated element-wise using the definition of the discrete parameter-to-observable map in \eqref{eq:paratoobs:discrete}:
\begin{equation}\label{eq:derivative:proof:4}
    \begin{aligned}
    \left[\frac{d \paratoobsDiscrete}{d \pki} \right]_{\ell, m}
    &= \frac{d}{d \pki} \left[ \paratoobsDiscrete \right]_{\ell, m} \\
    &= \frac{d}{d \pki} \int_\Domain \uDiscrete_m(\tk[\ell], \mathbf{x}) \MeasKern(\mathbf{x}, \pk[\ell]) d\mathbf{x}
\end{aligned}
\end{equation}
with state $\uDiscrete_m := \pdesolutionoperatorDiscrete(\mbasis)$.
Note that $\left[\frac{d \paratoobsDiscrete}{d \pki} \right]_{\ell, m} = 0$ for $k \neq \ell$.
For $k=\ell$, we use the Leibniz integration rule on \eqref{eq:derivative:proof:4}, exploiting that the integration domain $\Domain$ is path-independent:
\begin{align}\label{eq:derivative:proof:5}
     \left[\frac{d \paratoobsDiscrete}{d \pki} \right]_{k, m} 
    &= \int_\Domain \uDiscrete_m(\tk[\ell], \mathbf{x}) \frac{d \MeasKern}{d \mathbf{y}_i} \left.(\mathbf{x}, \mathbf{y})
    \right|_{\mathbf{y}=\pki} d\mathbf{x}.
\end{align}
By our assumptions on $\MeasKern$, the integral \eqref{eq:derivative:proof:5} is well-defined and continuous in $\pki$.
\end{proof}

\begin{proof}[Proof of Corollary \ref{thm:gradient:3}]
Inserting the form \eqref{eq:measkern:alternative} into the definition of the parameter-to-observable matrix $\paratoobsDiscrete$ in \eqref{eq:paratoobs:discrete}, we get after a change of variables that
\begin{align*}
    [\paratoobsDiscrete]_{\ell,m} = \int_{B_r(0)} u^h_m(\tk, \mathbf{x} + \pk) \phi(\mathbf{x})d\mathbf{x}
\end{align*}
for any $1 \le \ell \le \dimTime$ and $1 \le m \le \dimParameter$.
The matrix-by-scalar derivative $\frac{d \paratoobsDiscrete}{d \pki} \in \mathbb{R}^{\dimTime \times \dimParameter}$ has thus the form
\begin{align}\label{eq:derivative:proof:6}
     \left[\frac{d \paratoobsDiscrete}{d \pki} \right]_{k, m} 
    &= \int_{B_r(0)} \frac{\partial u^h_m}{\partial \mathbf{x}_i}(\tk, \mathbf{x} + \pk) \phi(\mathbf{x})d\mathbf{x},
\end{align}
and is zero for all other rows $\ell \neq k$.
Note that \eqref{eq:derivative:proof:6} is bounded because $u_m^h(t_k) \in H^1(\Omega)$.
Continuity follows from the continuity of $\phi$ and the compactness of $\overline{B_r(\mathbf{0})}$.
\end{proof}

\bibliographystyle{elsarticle-num}
\bibliography{myreferences.bib}

\end{document}